\documentclass[journal=gmj]{CUP-JNL-DTM}%

\usepackage{graphicx}
\usepackage{multicol,multirow}
\usepackage{amsmath,amssymb,amsfonts}
\usepackage{mathrsfs}
\usepackage{amsthm}
\usepackage{rotating}
\usepackage{appendix}
\usepackage{ifpdf}
\usepackage[T1]{fontenc}
\usepackage{newtxtext}
\usepackage{newtxmath}
\usepackage{textcomp}
\usepackage{xcolor}
\usepackage{lipsum}
\usepackage[colorlinks,allcolors=blue]{hyperref}
\usepackage{bm}
\usepackage{algorithm}
\usepackage{algorithmic}
\usepackage{longtable}
\makeatletter
\long\def\LT@makecaption#1#2#3{%
  \LT@mcol\LT@cols c{%
    \hbox to\z@{%
      \hss
      \parbox[t]\LTcapwidth{%
        \@tablecaption{#2}{#3}%
        \vskip\belowcaptionskip
      }%
      \hss
    }%
  }%
}
\makeatother
\usepackage{pdflscape}
\usepackage{array}
\graphicspath{{./}{../figures/}}

\newtheorem{theorem}{Theorem}[section]

\newtheorem{remark}{Remark}[section]
\newtheorem{proposition}{Proposition}[section]

\theoremstyle{definition}
\numberwithin{equation}{section}

\jname{Data/Math}
\articletype{ARTICLE TYPE}
\jyear{YEAR}

\begin{document}

\begin{Frontmatter}

\title[SHIM]{Bayesian Variable Selection for High-Dimensional 
Predictors with Missing Psychometric Outcomes}

\author[1]{Zongyue Teng}
\author[2]{Shujie Ma}
\author[3,4,5]{Timothy J.\ Hohman} 
\author[3,4,5]{Angela L.\ Jefferson} 
\author[1,4,5]{Panpan Zhang}

\authormark{Teng \textit{et al}.}

\address[1]{\orgdiv{Department of Biostatistics}, 
\orgname{Vanderbilt University Medical Center}, 
\orgaddress{\city{Nashville}, \postcode{37203}, \state{TN}, 
\country{USA}}}

\address[2]{\orgdiv{Department of Statistics}, 
\orgname{University of California, Riverside}, 
\orgaddress{\city{Riverside}, \postcode{92521}, \state{CA}, 
\country{USA}}}

\address[3]{\orgdiv{Department of Neurology}, \orgname{Vanderbilt 
University Medical Center}, \orgaddress{\city{Nashville}, 
\postcode{37232}, \state{TN},  \country{USA}}}

\address[4]{\orgdiv{Vanderbilt Memory and Alzheimer's Center}, 
\orgname{Vanderbilt University Medical Center}, 
\orgaddress{\city{Nashville}, \postcode{37203}, \state{TN},  
\country{USA}}}

\address[5]{\orgdiv{Vanderbilt Alzheimer's Disease Research Center}, 
\orgname{Vanderbilt University Medical Center}, 
\orgaddress{\city{Nashville}, \postcode{37203}, \state{TN},  
\country{USA}}. \email{panpan.zhang@vumc.org}}
	
\authormark{Teng et al.}

\keywords{Hierarchical shrinkage, multilevel horseshoe, multiple 
imputation, multivariate outcomes, neuroimaging and neuropsychology}

\keywords[MSC Codes]{\codes[Primary]{62F15}; 
\codes[Secondary]{92B15, 62J07, 62D10}}

\abstract{High-dimensional, multimodal predictors and partially 
observed multivariate outcomes are common in psychometric research. 
However, existing regularization methods often do not accommodate 
hierarchical predictor structures and are primarily designed for 
univariate outcomes. We propose SHIM, a Bayesian framework for 
structured variable selection that combines hierarchical horseshoe 
shrinkage with a Bayesian treatment of missing outcomes. The 
framework jointly accommodates predictor hierarchies, dependence 
among outcomes, and incomplete multivariate responses. We establish 
theoretical properties of the proposed prior specification and 
evaluate SHIM through simulation studies. The results demonstrate 
that SHIM balances sensitivity with false-positive control while 
yielding accurate coefficient estimates and well-calibrated 
uncertainty quantification. We further apply SHIM to data from an 
Alzheimer's disease cohort to characterize associations between 
multimodal neuroimaging measures and multivariate neuropsychological 
outcomes and to generate posterior-based multiple imputations for 
downstream analyses of the relationships between fluid biomarkers 
and cognition. An R package, \texttt{shim}, is publicly available to 
facilitate implementation.}

\end{Frontmatter}


\localtableofcontents

\section{Introduction}
\label{sec:intro}

Modern psychometric studies increasingly integrate multimodal 
data to characterize cognitive and behavioral functioning 
from diverse biological perspectives. In neurological and 
psychiatric research, for example, structural and functional 
neuroimaging, diffusion imaging, genomics, proteomics, and other 
molecular biomarkers have been collected alongside comprehensive 
neuropsychological assessments to investigate the complex mechanisms 
underlying cognitive impairment and disease progression 
\citep{fan2008spatial, moore2020lower, rokicki2021multimodal}. 
Compared with analyses based on a single data source, multimodal 
studies may provide a richer and more comprehensive characterization 
of biological systems by leveraging complementary information across 
data modalities, potentially improving the identification and 
prediction of biomarkers associated with clinically meaningful 
psychometric outcomes~\cite{kline2022multimodal, 
venugopalan2021multimodal}. 

Meanwhile, the integration of multimodal data substantially 
increases the statistical complexity of psychometric studies. First, 
each modality may contain hundreds or thousands of candidate 
predictors, leading to high-dimensional regression problems in which 
the number of predictors may be comparable to or exceed the sample 
size. Conventional regression methods are therefore prone to unstable
estimation, multicollinearity, and overfitting, inducing
regularization and variable 
selection~\cite{carvalho2010thehorseshoe, 
tibshirani1996regression}. Moreover, high-dimensional predictors 
collected from multimodal data often exhibit intrinsic hierarchical 
organization. For example, structural magnetic resonance imaging 
(MRI) measurements may be organized according to anatomical regions 
within imaging modalities, while molecular biomarkers may be grouped 
according to biological pathways or functional categories. Such 
hierarchical structures further induce complex correlation patterns 
among predictors and suggest that biologically related variables 
should be modeled jointly rather than treated as independent 
features. Existing regularization methods, however, typically ignore 
these multilevel relationships or accommodate only a single grouping 
structure, potentially limiting both statistical efficiency and 
scientific interpretability~\cite{alt2025hierarchical, 
xu2016bayesian, yuan2006model}.

An additional challenge arises from the multivariate nature of 
psychometric outcomes. Modern psychometric studies routinely assess 
multiple cognitive or behavioral domains, such as memory, executive 
function, language, and attention, which reflect distinct yet 
correlated aspects of the same underlying neuropsychological 
process~\cite{weintraub2018version}. Modeling these outcomes jointly 
allows statistical procedures 
to borrow information across domains through their dependence, 
potentially improving estimation efficiency and statistical power 
relative to separate univariate 
analyses~\cite{brown1998multivariate, kundu2021bayesian, 
zellner1962anefficient}. In addition, many biomarkers may exhibit 
effects that influence multiple psychometric outcomes 
simultaneously~\cite{jack2018nia, sattlecker2014alzheimer}. Joint 
modeling therefore provides a coherent 
framework for identifying predictors with shared or outcome-specific 
associations while appropriately accounting for the correlation 
structure among outcomes. Despite the availability of multivariate
extensions, many widely used high-dimensional variable-selection
methods were originally developed for single-outcome regression
settings and may fail to exploit dependence across multiple 
outcomes~\cite{fan2010aselective}.

Existing methods have addressed different aspects of the statistical
challenges arising from multimodal psychometric studies, although
typically in isolation. To accommodate structured predictors,
group-based regularization methods, including the group lasso and
Bayesian sparse group selection, have been developed to exploit 
grouping information among predictors~\cite{xu2015bayesian, 
yuan2006model}. More recently, hierarchical
Bayesian shrinkage priors have been proposed to accommodate nested
predictor structures through adaptive regularization at multiple
levels~\citep{alt2025hierarchical, xu2016bayesian}. While these
approaches improve variable selection by incorporating predictor 
structure, they have primarily focused on univariate responses and 
generally do not account for dependence among multiple psychometric 
outcomes. Multivariate Bayesian variable selection methods have also 
been developed to exploit correlations among multiple outcomes and 
identify predictors with shared effects across 
responses~\citep{brown1998multivariate, kundu2021bayesian, 
liquet2017bayesian}. However, these methods typically treat 
predictors as independent or accommodate a single grouping structure.
Finally, methods that have been devoted to handling incomplete
outcomes through multiple imputation and joint modeling approaches
are not designed for high-dimensional variable selection with
hierarchically structured predictors~\cite{rubin1987multiple, 
tanner1987calculation}. Therefore, there remains a need for a 
unified statistical framework that simultaneously accommodates 
multilevel predictor structures, multivariate psychometric outcomes, 
and incomplete outcome data.

To address these challenges, we propose SHIM (Shrinkage for
Hierarchical predictors and Incomplete Multivariate outcomes), a
Bayesian framework for variable selection with high-dimensional,
multimodal predictors and partially observed multivariate
psychometric outcomes. SHIM considers a hierarchical horseshoe prior 
for the regression coefficients to accommodate the multilevel
structure of multimodal predictors while jointly modeling
correlated psychometric outcomes through a multivariate regression
model. Missing outcomes are treated as latent variables and inferred
jointly with the model parameters within a Bayesian framework. In 
addition to direct posterior inference for the model parameters, 
posterior samples of the missing outcomes naturally yield
a posterior-based multiple imputation representation for
subsequent analyses under statistical models requiring completed
outcome data. An open-source R package, \texttt{shim}, is available 
at \url{https://github.com/zongyue-teng/shim} to facilitate the 
implementation of SHIM.

The remainder of the manuscript is organized as follows. 
Section~\ref{sec:method} introduces the SHIM framework, including 
the prior specification, posterior computation, posterior-based 
multiple-imputation strategy, pseudocode algorithm, and 
probabilistic justification of the prior specification. 
Section~\ref{sec:sim} presents a simulation study comprising eight 
primary scenarios that vary in predictor dimension, modality and 
group structure, and proportion of missing outcomes. This section 
also reports four sets of sensitivity analyses assessing the 
robustness of SHIM and an additional experiment evaluating its 
multiple-imputation performance. Section~\ref{sec:app} presents two 
applications to an Alzheimer's disease cohort, where the first 
examines associations between multimodal neuroimaging measures and 
multivariate outcomes, and the second evaluates multiple-imputation 
performance in a downstream analysis of the relationship between 
cerebrospinal fluid biomarkers and cognition. Finally, 
Section~\ref{sec:dis} summarizes the proposed method, discusses its 
limitations, and outlines directions for future research.

\section{Method}
\label{sec:method}

\subsection{Notations}
\label{sec:notation}

Consider a study with $n$ independent individuals indexed by
$i = 1, 2, \ldots, n$. For each individual, let $\bm{y}_i = (y_{i1}, 
y_{i2}, \ldots, y_{iQ})^\top \in \mathbb{R}^Q$ denote a vector of 
psychometric outcomes measured across $Q$ domains. In many clinical 
and behavioral studies, some components of $\bm{y}_i$ may be missing 
due to missed visits, inability to complete assessments, or study 
design~\cite{narhi2009treating, suen2026modeling}. Let $\bm{r}_i = 
(r_{i1}, r_{i2}, \ldots, r_{iQ})^\top$
denote the corresponding missingness indicators, with $r_{iq} = 1$ if
$y_{iq}$ is observed and $r_{iq} = 0$ otherwise. Accordingly, we are 
able to write $\bm{y}_i = (\bm{y}_{i, \mathrm{obs}}, 
\bm{y}_{i,\mathrm{mis}})$ to distinguish the observed and missing 
components.

For each individual $i$, a high-dimensional predictor vector
$\bm{x}_i = (x_{i1}, x_{i2}, \ldots, x_{iP})^\top \in \mathbb{R}^P$
is observed. Throughout the paper, predictors are assumed to be fully
observed and standardized to have mean zero and unit variance. The
feature dimension $P$ may be substantially larger than the sample size
$n$. In psychometric and biomedical applications, these predictors may
represent high-dimensional features derived from one or multiple data
modalities, such as regional brain imaging measurements, diffusion
tract characteristics, or other high-dimensional 
biomarkers~\cite{fan2008spatial, rokicki2021multimodal}. In addition 
to the high-dimensional predictors, let
$\bm{z}_i = (z_{i1}, z_{i2}, \ldots, z_{iL})^\top \in \mathbb{R}^{L}$
denote a vector of low-dimensional covariates, where $L$ is fixed
and typically much smaller than both $n$ and $P$. These covariates
may include demographic and clinical variables, such as age, sex,
education, and {\em apolipoprotein E} ({\em APOE}) genotype, which 
are conventionally adjusted for in psychometric and biomedical 
studies. Throughout the paper, these covariates are assumed to be 
fully observed and are not subject to hierarchical shrinkage.

Many high-dimensional predictors possess a natural hierarchical 
organization. For example, multimodal neuroimaging predictors may be 
organized according to imaging modality and anatomical region. To 
accommodate such structure, we assume that predictors are organized 
into three nested levels: (1) modality level indexed by $m = 1, 2, 
\ldots, M$, (2) group level within modality $m$ indexed by $g = 1, 2, 
\ldots, G_m$, and (3) feature level within group $g$ indexed by $k = 
1, 2, \ldots, P_{mg}$. Under this organization, the total number of 
predictors is
\[
P = \sum_{m = 1}^{M} \sum_{g = 1}^{G_m} P_{mg}.
\]
The proposed method in this work is developed under this hierarchical
structure, but the framework can be extended naturally to more complex
hierarchical organizations. For notation convenience, predictors are
subsequently re-indexed by $k = 1, 2, \ldots, P$, while $m(k)$ and 
$g(k)$ respectively identify the associated modality and group
memberships.

Consider the multivariate regression model
\[
\bm{y}_i = \bm{\alpha} + \bm{\Psi}^{\top}\bm{z}_i + 
\bm{B}^{\top}\bm{x}_i + \bm{\varepsilon}_i,
\]
where $\bm{\alpha} \in \mathbb{R}^Q$ is an intercept vector,
\[
\bm{\Psi} = (\bm{\psi}_1, \bm{\psi}_2, \ldots,\bm{\psi}_Q) \in
\mathbb{R}^{L\times Q}
\text{ with } \bm{\psi}_q\in\mathbb{R}^{L},
\]
is the matrix of regression coefficients corresponding to the
low-dimensional, non-regularized covariates,
\[
\bm{B} = (\bm{\beta}_1, \bm{\beta}_2, \ldots, \bm{\beta}_Q) \in 
\mathbb{R}^{P\times Q}
\text{ with } \bm{\beta}_q \in \mathbb{R}^{P},
\]
is the matrix of regression coefficients corresponding to the
high-dimensional predictors, and $\bm{\varepsilon}_i
\sim \mathcal{N}_Q(\bm{0},\bm{\Sigma})$ captures the residual 
dependence among outcomes.

The coefficient matrix $\bm{B}$ is assumed to be sparse, reflecting
that only a subset of the high-dimensional predictors contributes
meaningfully to the psychometric outcomes. In contrast, the
low-dimensional covariates in $\bm{z}_i$ are treated as adjustment 
variables and are not subject to shrinkage. The primary
objective of the proposed framework is to estimate the regression
coefficient matrix $\bm{B}$ and identify important predictors while
accounting for both the hierarchical organization of the
high-dimensional predictors and the multivariate dependence among the
outcomes.

\subsection{Multivariate Regression with Missing Outcomes}
\label{sec:missing}

Conditional on the predictor and covariate vectors $\bm{x}_i$ and 
$\bm{z}_i$ and model parameters $\bm{\theta} := (\bm{\alpha}, 
\bm{\Psi}, \bm{B}, \bm{\Sigma})$, the proposed multivariate 
regression model assumes
\begin{equation}
	\label{eq:mvn_model}
	\bm{y}_i \mid \bm{x}_i, \bm{z}_i, \bm{\theta} \sim
	\mathcal{N}_Q \left(\bm{\alpha} + \bm{\Psi}^{\top}\bm{z}_i
	+ \bm{B}^{\top}\bm{x}_i, \bm{\Sigma} \right).
\end{equation}
This formulation allows the covariance matrix $\bm{\Sigma}$ to 
capture residual correlations among psychometric domains and 
borrow information across outcomes when some outcomes are partially
observed.

Throughout the analysis, we assume that the missing data mechanism
satisfies the missing at random (MAR) condition; that is
\[
\Pr(\bm{r}_i \mid \bm{y}_i, \bm{x}_i, \bm{z}_i) =
\Pr(\bm{r}_i \mid \bm{y}_{i,\mathrm{obs}}, \bm{x}_i, \bm{z}_i),
\]
under which the probability of missingness depends on the observed
outcomes together with the observed covariates and predictors. Under 
this assumption, inference for $\bm{\theta}$ can be based on the 
observed-data likelihood obtained 
by integrating over the missing outcomes. Specifically, the 
contribution of subject $i$ to the observed-data likelihood is
\[
p(\bm{y}_{i,\mathrm{obs}} \mid \bm{x}_i, \bm{z}_i, \bm{\theta})
 = \int p(\bm{y}_i \mid \bm{x}_i, \bm{z}_i, 
\bm{\theta}) \, \mathrm{d} \bm{y}_{i,\mathrm{mis}},
\]
where $p(\bm{y}_i \mid \bm{x}_i, \bm{z}_i, \bm{\theta})$ is specified 
in Equation~\eqref{eq:mvn_model}.

Rather than directly maximizing the observed-data likelihood, which
requires computationally intensive numerical integration, we treat the
missing outcomes $\bm{y}_{i, \mathrm{mis}}$ as latent variables and
jointly infer them along with the model 
parameters~\cite{tanner1987calculation}. Specifically,
inference is based on the joint posterior distribution
$p(\bm{\theta}, \bm{Y}_{\mathrm{mis}} \mid \bm{Y}_{\mathrm{obs}}, 
\bm{X})$, where $\bm{Y}_{\mathrm{mis}} = \{\bm{y}_{1,\mathrm{mis}}, 
\bm{y}_{2,\mathrm{mis}}, \ldots, \bm{y}_{n,\mathrm{mis}}\}$ and 
$\bm{Y}_{\mathrm{obs}} = \{\bm{y}_{1,\mathrm{obs}}, 
\bm{y}_{2,\mathrm{obs}}, \ldots,\bm{y}_{n,\mathrm{obs}}\}$
denote the collections of missing and observed outcomes,
respectively. This latent-variable formulation propagates uncertainty
associated with the missing outcomes directly into posterior
inference for all model parameters. By jointly inferring the 
missing outcomes and model parameters within a unified Bayesian 
framework, this approach avoids potential incompatibility 
between separate imputation and analysis stages, providing coherent 
Bayesian inference for multivariate variable selection 
\citep{gelman2013bayesian}.

As a natural consequence of the joint Bayesian formulation, posterior
draws of the missing outcomes additionally generate completed
datasets, yielding a posterior-based multiple imputation
representation of the missing data. Under the multivariate
regression model in Equation~\eqref{eq:mvn_model}, posterior 
inference for the model parameters is performed directly from their 
joint posterior distribution and therefore does not require a 
separate multiple imputation stage. Instead, the completed datasets 
provide a coherent and convenient interface for downstream analyses 
using statistical models other than the proposed multivariate 
regression model that require fully observed outcome data. Additional 
details are presented in Section~\ref{sec:mi}.

\subsection{Hierarchical Horseshoe Prior Specification}
\label{sec:hhorseshoe}

To perform multilevel variable selection aligned with the 
hierarchical organization of the predictors, we propose SHIM 
(\underline{S}hrinkage for \underline{H}ierarchical predictors and 
\underline{I}ncomplete \underline{M}ultivariate outcomes), which 
employs a hierarchical horseshoe prior on the regression coefficient 
matrix $\bm{B}$, motivated by grouped and hierarchical extensions of 
the classical horseshoe prior~\cite{alt2025hierarchical, 
xu2016bayesian}. 
Let $\beta_{kq}$ denote the regression coefficient corresponding to
predictor $k$ and outcome $q$. We introduce a hierarchical
shrinkage structure that encourages predictor-specific sparsity to
be shared across outcomes while allowing outcome-specific
deviations. Specifically, we assume
\begin{equation}
	\label{eq:hs_beta}
	\beta_{kq} \mid \tau, \gamma_{m(k)}, \delta_{m(k), g(k)}, \lambda_k, 
	\xi_{kq} \sim \mathcal{N} \left(0, \tau^2 \gamma_{m(k)}^2 
	\delta_{m(k), g(k)}^2 \lambda_k^2	\xi_{kq}^2 \right),
\end{equation}
where $\tau > 0$ is a global shrinkage parameter controlling the
overall sparsity across predictors; $\gamma_{m(k)} > 0$ is a 
modality-level shrinkage parameter for the modality containing 
predictor $k$, allowing entire modalities to be adaptively shrunk 
toward zero when weakly associated with the outcomes; $\delta_{m(k), g(k)} 
> 0$ is a group-level shrinkage parameter for the group
containing predictor $k$, inducing regularization among related
predictors within the same modality; $\lambda_k > 0$ is a 
predictor-level local shrinkage parameter shared across all outcomes; 
and $\xi_{kq} > 0$ is an outcome-specific local shrinkage parameter 
that allows individual coefficients to deviate from the shared 
sparsity pattern. These shrinkage parameters jointly induce
regularization at the global, modality, group, predictor, and
outcome levels. Note that the indices $m(k)$ and $g(k)$ identify
the modality and the associated within-modality group of
predictor $k$, respectively.

We assign independent half-Cauchy priors to the shrinkage parameters:
\[
\tau \sim \mathcal{C}^+(0, 1), \quad
\gamma_m \sim \mathcal{C}^+(0, 1), \quad
\delta_g \sim \mathcal{C}^+(0, 1), \quad
\lambda_k \sim \mathcal{C}^+(0, 1), \quad
\xi_{kq} \sim \mathcal{C}^+(0, 1),
\]
for all $m = 1, 2, \ldots, M$, $g = 1, 2, \ldots, G_m$, $k = 1, 2, 
\ldots, P$, and $q = 1, 2, \ldots, Q$. This specification extends the 
classical horseshoe prior \cite{carvalho2010thehorseshoe}
to a hierarchical setting that enables adaptive shrinkage across
multiple levels of predictor organization.

We would like to note that the shrinkage parameters in 
Equation~\eqref{eq:hs_beta} enter the model through a multiplicative 
variance decomposition, and are therefore not individually 
identifiable. Specifically, different configurations of
$(\tau, \gamma_{m(k)}, \delta_{m(k), g(k)}, \lambda_k, \xi_{kq})$
may induce the same marginal variance for $\beta_{kq}$, resulting in 
a non-unique decomposition of the overall shrinkage effect. Such 
non-identifiability is a common characteristic of global-local 
shrinkage priors and has been widely discussed in the literature
\citep{bhadra2020horseshoe, bhadra2019lasso, 
carvalho2010thehorseshoe}. However, this non-identifiability does not 
affect inference on the regression coefficients $\bm{B}$, which are 
the primary quantities of interest. Accordingly, the shrinkage 
parameters should be viewed as auxiliary variables that jointly 
determine the amount of regularization through their product, rather 
than as separately interpretable model parameters.

\subsection{Posterior Computation}
\label{sec:posterior}

Statistical inference is based on the joint posterior distribution
\[
p(\bm{\theta}, \bm{Y}_{\mathrm{mis}}, \tau, \bm{\gamma},
\bm{\delta}, \bm{\lambda}, \bm{\xi} \mid \bm{Y}_{\mathrm{obs}}, 
\bm{X}, \bm{Z}),
\]
where $\bm{\gamma}$, $\bm{\delta}$, $\bm{\lambda}$, and 
$\bm{\xi}$ collect the shrinkage parameters introduced in 
Section~\ref{sec:hhorseshoe}. Direct evaluation of this
posterior distribution is analytically intractable due to the
hierarchical shrinkage structure and the presence of missing outcomes.
We therefore adopt a Bayesian data augmentation 
strategy~\cite{tanner1987calculation} and construct
a Markov chain Monte Carlo (MCMC) algorithm that iteratively samples
from the full conditional distributions of each block of unknown
quantities. Specifically, at each iteration, the algorithm first
updates the missing outcomes $\bm{Y}_{\mathrm{mis}}$ conditional on 
the current parameter values, followed by the regression parameters
$(\bm{\alpha}, \bm{\Psi}, \bm{B})$, the covariance matrix 
$\bm{\Sigma}$, and the shrinkage parameters $(\tau, \gamma_{m(k)}, 
\delta_{m(k), g(k)}, \lambda_k, \xi_{kq})$. These updates constitute a 
Gibbs sampler with closed-form or conjugate distributions. For 
notation simplicity, iteration superscripts are omitted throughout 
this subsection.

For each individual $i$, the conditional distribution of the missing
outcome components $\bm{y}_{i, \mathrm{mis}}$ given the observed 
components $\bm{y}_{i, \mathrm{obs}}$ and the current parameter 
values follows directly from the multivariate normal model in
Equation~\eqref{eq:mvn_model}. Define $\bm{\mu}_i = \bm{\alpha} + 
\bm{\Psi}^{\top}\bm{z}_i + \bm{B}^{\top}\bm{x}_i,$, and partition 
$\bm{y}_i$, $\bm{\mu}_i$, and $\bm{\Sigma}$ according to the observed 
and missing outcome components for subject $i$:
\[
\bm{y}_i = 
\begin{pmatrix}
	\bm{y}_{i, \mathrm{obs}} \\
	\bm{y}_{i, \mathrm{mis}}
\end{pmatrix},
\qquad
\bm{\mu}_i =
\begin{pmatrix}
	\bm{\mu}_{i, o} \\
	\bm{\mu}_{i, m}
\end{pmatrix},
\qquad
\bm{\Sigma} =
\begin{pmatrix}
	\bm{\Sigma}_{oo} & \bm{\Sigma}_{om} \\
	\bm{\Sigma}_{mo} & \bm{\Sigma}_{mm}
\end{pmatrix},
\]
where the subscripts $o$ and $m$ denote the observed and missing 
outcome components, respectively, for
individual $i$. The conditional distribution of $\bm{y}_{i, 
\mathrm{mis}}$ is therefore
\[
\bm{y}_{i,\mathrm{mis}} \mid \bm{y}_{i,\mathrm{obs}}, \bm{x}_i, 
\bm{z}_i, \bm{\theta} \sim \mathcal{N} \left(\bm{\mu}_{i,m} + 
\bm{\Sigma}_{mo}\bm{\Sigma}_{oo}^{-1} 
(\bm{y}_{i,\mathrm{obs}}-\bm{\mu}_{i,o}), \bm{\Sigma}_{mm}
- \bm{\Sigma}_{mo}\bm{\Sigma}_{oo}^{-1} \bm{\Sigma}_{om} \right).
\]
Because the partition into observed and missing outcome components is
subject-specific, the dimensions of $\bm{y}_{i,\mathrm{mis}}$
and the associated covariance blocks generally vary across 
individuals.

To update the regression coefficients, we stack the completed outcome
data and perform a joint update. Denote $\bm{Y} = (\bm{y}_1, 
\bm{y}_2, \ldots, \bm{y}_n)^\top \in \mathbb{R}^{n \times Q}$ and 
$\bm{X} = (\bm{x}_1, \bm{x}_2, \ldots, \bm{x}_n)^\top \in 
\mathbb{R}^{n \times P}$. The regression model in 
Equation~\eqref{eq:mvn_model} can be written in matrix form as
\[
\bm{Y} = \bm{1}_n \bm{\alpha}^\top + \bm{Z}\bm{\Psi} + \bm{X}\bm{B} + 
\bm{E},
\]
where $\bm{1}_n$ is the $n$-dimensional vector of ones, and the rows
of $\bm{E}$ are independently distributed as $\mathcal{N}_Q(\bm{0}, 
\bm{\Sigma})$. Next, let $\bm{b}_k = (\beta_{k1}, \beta_{k2}, \ldots, 
\beta_{kQ})^\top \in \mathbb{R}^Q$, for $k = 1, 2, \ldots,P$, denote 
the coefficient vector associated with predictor $k$ across the $Q$ 
outcomes. Equivalently, $\bm{b}_k^\top$ is the $k$th row of $\bm{B}$. 
Conditional on the shrinkage parameters, these predictor-specific
coefficient vectors are mutually independent and satisfy
\[
\bm{b}_k \mid \tau, \gamma_{m(k)}, \delta_{m(k), g(k)}, \lambda_k,
\xi_{k1}, \ldots, \xi_{kQ} \sim \mathcal{N}_Q(\bm{0}, \bm{\Omega}_k),
\]
with
\[
\bm{\Omega}_k =
\mathrm{diag}
\left(\tau^2\gamma_{m(k)}^2\delta_{m(k), g(k)}^2
\lambda_k^2 \xi_{k1}^2, \tau^2\gamma_{m(k)}^2\delta_{m(k), g(k)}^2
\lambda_k^2 \xi_{k2}^2, \ldots, \tau^2 \gamma_{m(k)}^2 \delta_{m(k), g(k)}^2
\lambda_k^2 \xi_{kQ}^2 \right).
\]

Let $\mathrm{vec}(\cdot)$ denote column-wise vectorization. Then
$\mathrm{vec}(\bm{B}^\top) = (\bm{b}_1^\top, \bm{b}_2^\top, \ldots, 
\bm{b}_P^\top)^\top$ is equivalent to stacking the rows of $\bm{B}$. 
Conditional on the shrinkage parameters, the prior for $\bm{B}$ 
admits the equivalent representation
\[
\mathrm{vec}(\bm{B}^\top) \mid \tau, \bm{\gamma}, \bm{\delta}, 
\bm{\lambda}, \bm{\xi}
\sim
\mathcal{N}_{PQ}(\bm{0},\bm{D}_{\bm{B}}),
\]
with $\bm{D}_{\bm{B}} = \mathrm{blockdiag}(\bm{\Omega}_1, 
\bm{\Omega}_2, \ldots, \bm{\Omega}_P)$.

Combining the matrix form of the Gaussian complete-data likelihood 
with the Gaussian prior for $\mathrm{vec}(\bm{B}^\top)$, the 
conditional posterior distribution of $\mathrm{vec}(\bm{B}^\top)$ is
\[
\mathrm{vec}(\bm{B}^\top) \mid \mathrm{others} \sim \mathcal{N}_{PQ}
(\bm{\mu}_{\bm{B}}, \bm{\Sigma}_{\bm{B}}),
\]
with posterior covariance and mean respectively given by
\[
\bm{\Sigma}_{\bm{B}} = \left\{\bm{D}_{\bm{B}}^{-1} + (\bm{X}^\top 
\bm{X}) \otimes \bm{\Sigma}^{-1}\right\}^{-1}, \qquad 
\bm{\mu}_{\bm{B}} = \bm{\Sigma}_{\bm{B}} (\bm{X}^\top \otimes 
\bm{\Sigma}^{-1}) \mathrm{vec}\left\{(\bm{Y} - \bm{1}_n 
\bm{\alpha}^\top - \bm{Z}\bm{\Psi})^{\top}\right\};
\]
see Appendix~\ref{app:conditionals_B} for detailed derivations.

To update the intercept vector $\bm{\alpha}$ and the regression
coefficient matrix $\bm{\Psi}$ corresponding to the low-dimensional
covariates that are not subject to shrinkage, we combine them into a
single coefficient matrix. Specifically, define
$\widetilde{\bm{Z}} = (\bm{1}_n, \bm{Z}) \in
\mathbb{R}^{n \times (L + 1)}$ and
\[
\widetilde{\bm{\Psi}}
=
\begin{pmatrix}
	\bm{\alpha}^{\top} \\
	\bm{\Psi}
\end{pmatrix}
\in \mathbb{R}^{(L + 1) \times Q},
\]
where the first row of $\widetilde{\bm{\Psi}}$ corresponds to the
intercept vector and the remaining rows correspond to the regression
coefficients associated with the low-dimensional covariates. The
multivariate regression model can therefore be written as
\[
\bm{Y} = \widetilde{\bm{Z}}\widetilde{\bm{\Psi}} + \bm{X}\bm{B} + 
\bm{E}.
\]

Let $\widetilde{\bm{\psi}}_q \in \mathbb{R}^{L + 1}$ denote the
$q$-th column of $\widetilde{\bm{\Psi}}$, corresponding to the
intercept and low-dimensional covariate coefficients for outcome
$q$. We assign independent Gaussian priors to the columns of
$\widetilde{\bm{\Psi}}$ as
\[
\widetilde{\bm{\psi}}_q \overset{\mathrm{ind}}{\sim} \mathcal{N}_{L + 
1}\left( \widetilde{\bm{\psi}}_{0q}, \widetilde{\bm{V}}_0
\right),
\]
for $q = 1, 2, \ldots, Q$, where
$\widetilde{\bm{\psi}}_{0q} \in \mathbb{R}^{L + 1}$ denotes the
prior mean vector for outcome $q$ and $\widetilde{\bm{V}}_0 \in
\mathbb{R}^{(L + 1) \times (L + 1)}$ is a symmetric positive
definite prior covariance matrix. For notation simplicity, define
$\widetilde{\bm{\Psi}}_0 = (\widetilde{\bm{\psi}}_{01}, 
\widetilde{\bm{\psi}}_{02}, \ldots, \widetilde{\bm{\psi}}_{0Q}
)\in \mathbb{R}^{(L + 1)\times Q}$. It follows that the prior for 
$\widetilde{\bm{\Psi}}$ can be expressed as
\[
\mathrm{vec}(\widetilde{\bm{\Psi}}^\top) \sim \mathcal{N}_{(L + 1)Q}
\left(\mathrm{vec}(\widetilde{\bm{\Psi}}_0^\top),
\widetilde{\bm{V}}_0 \otimes \bm{I}_Q \right).
\]

Combining the matrix form of the Gaussian complete-data likelihood
with this Gaussian prior, the conditional posterior distribution of
$\mathrm{vec}(\widetilde{\bm{\Psi}}^\top)$ is 
\[
\mathrm{vec}(\widetilde{\bm{\Psi}}^\top) \mid \mathrm{others}
\sim \mathcal{N}_{(L + 1)Q} \left(\bm{\mu}_{\widetilde{\bm{\Psi}}},
\bm{\Sigma}_{\widetilde{\bm{\Psi}}} \right),
\]
with posterior covariance and mean respectively given by
\begin{align*}
	\bm{\Sigma}_{\widetilde{\bm{\Psi}}} &= \left\{
	\widetilde{\bm{V}}_0^{-1}\otimes\bm{I}_Q + 
	(\widetilde{\bm{Z}}^\top\widetilde{\bm{Z}})
	\otimes\bm{\Sigma}^{-1}\right\}^{-1}, \\
	\bm{\mu}_{\widetilde{\bm{\Psi}}} &= 
	\bm{\Sigma}_{\widetilde{\bm{\Psi}}} 
	\left\{(\widetilde{\bm{V}}_0^{-1}\otimes\bm{I}_Q) \mathrm{vec}
	(\widetilde{\bm{\Psi}}_0^\top) + 
	(\widetilde{\bm{Z}}^\top\otimes\bm{\Sigma}^{-1})\mathrm{vec}
	\left((\bm{Y} - \bm{X}\bm{B})^\top\right)\right\};
\end{align*}
See Appendix~\ref{app:conditionals_alpha} for detailed derivations.

To update the residual covariance matrix, define the residual vector
for subject $i$, conditional on the current values of
$\bm{\alpha}$, $\bm{\psi}$, $\bm{B}$, and the completed outcome data, 
by $\bm{e}_i = \bm{y}_i - \bm{\alpha} - \bm{\Psi}^{\top}\bm{z}_i - 
\bm{B}^{\top}\bm{x}_i$. We assign the 
inverse-Wishart prior $\bm{\Sigma} \sim \mathcal{IW}(\nu_0, 
\bm{S}_0)$, where $\nu_0$ denotes the prior degrees of freedom and
$\bm{S}_0$ is a symmetric positive definite matrix. Since the 
inverse-Wishart prior is conjugate to the multivariate
normal likelihood~\cite{gelman2013bayesian}, the posterior 
distribution of
$\bm{\Sigma}$ is
\[\bm{\Sigma} \mid \mathrm{others} \sim \mathcal{IW}
\left(\nu_0 + n, \bm{S}_0 + \sum_{i = 1}^{n}\bm{e}_i\bm{e}_i^\top
\right).
\]

Finally, the shrinkage parameters introduced in
Section~\ref{sec:hhorseshoe} are updated using the auxiliary-variable
representation of the half-Cauchy prior proposed
by~\cite{makalic2016simple}. Under this representation, each
half-Cauchy prior is expressed as an inverse-gamma mixture.
Together with the Gaussian prior for the regression
coefficients in Equation~\eqref{eq:hs_beta}, this leads to
inverse-gamma conditional posterior distributions for the
corresponding squared shrinkage parameters. Within each Gibbs 
sampling iteration, the global, modality-level,
group-level, predictor-level, and outcome-specific shrinkage
parameters are updated sequentially according to their respective 
conditional posterior distributions. The complete Gibbs sampling
procedure is summarized in Algorithm~\ref{alg:gibbs}.

\begin{algorithm}[tbp]
	\caption{Gibbs Sampler for Posterior Computation}
	\label{alg:gibbs}
	\begin{algorithmic}[1]
		\STATE Initialize $\bm{Y}_{\mathrm{mis}}^{(0)}$, 
		$\bm{\alpha}^{(0)}$, $\bm{\Psi}^{(0)}$, $\bm{B}^{(0)}$, 
		$\bm{\Sigma}^{(0)}$, and 
		shrinkage parameters.
		\FOR{$t = 1, 2, \ldots, T$}
		\STATE Sample $\bm{Y}_{\mathrm{mis}}^{(t)}$ from its 
		conditional	multivariate normal distribution.
		\STATE Sample $\bm{B}^{(t)}$ from its conditional 
		multivariate normal distribution.
		\STATE Sample $\bm{\alpha}^{(t)}$ and $\bm{\Psi}^{(t)}$ 
		jointly	from their conditional multivariate normal 
		distribution.
		\STATE Sample $\bm{\Sigma}^{(t)}$ from its conditional 
		inverse-Wishart distribution.
		\STATE Update the shrinkage parameters
		$\tau^{(t)}$, $\bm{\gamma}^{(t)}$, $\bm{\delta}^{(t)}$,
		$\bm{\lambda}^{(t)}$, and $\bm{\xi}^{(t)}$
		from their respective conditional posterior distributions.
		\ENDFOR
		\STATE Retain posterior samples after burn-in for inference.
	\end{algorithmic}
\end{algorithm}

\subsection{Posterior-Based Multiple Imputation}
\label{sec:mi}

As discussed in Section~\ref{sec:missing}, inference for the model
parameters is conducted directly from their marginal posterior
distributions. Meanwhile, posterior samples of
$\bm{Y}_{\mathrm{mis}}$ also provide a natural posterior-based
multiple imputation representation, which is particularly useful when 
the completed outcome data are subsequently analyzed using 
statistical models that differ from Equation~\eqref{eq:mvn_model}.

Suppose that, after discarding the burn-in iterations, a total of
$T-T_{\mathrm{burnin}}$ posterior samples are retained. To
construct multiple imputations, a subset of $M$ posterior draws
($M \ll T-T_{\mathrm{burnin}}$) is selected, for example,
through systematic thinning. For $m=1,\ldots,M$, the $m$th selected
posterior draw of the missing outcomes,
$\bm{Y}_{\mathrm{mis}}^{(m)}$, together with the observed outcomes,
defines the completed dataset given by $\bm{Y}^{(m)} = 
(\bm{Y}_{\mathrm{obs}}, \bm{Y}_{\mathrm{mis}}^{(m)})$. The completed 
datasets $\{\bm{Y}^{(1)}, \bm{Y}^{(2)}, 
\ldots, \bm{Y}^{(M)}\}$ may subsequently be analyzed
using statistical models other than Equation~\eqref{eq:mvn_model} 
that require fully observed outcome data. These downstream analyses 
may additionally adjust for fully observed covariates, including 
those contained in $\bm{Z}$, depending on the scientific objective. 
In such settings, inference can be conducted using Rubin's 
rules~\cite{rubin1987multiple}. 

\begin{remark}
	It is worth emphasizing that the posterior-based multiple 
	imputation representation serves a different purpose from 
	posterior inference under the proposed Bayesian model. For the 
	multivariate regression model in Equation~\eqref{eq:mvn_model}, 
	inference for the model parameters is conducted directly from the 
	joint posterior distribution, without requiring a separate 
	multiple imputation stage. Consequently, parameter estimation and 
	uncertainty quantification are performed within a single Bayesian 
	framework, avoiding the need to specify separate imputation and 
	analysis models. The completed datasets are instead intended for 
	downstream analyses based on alternative statistical models that 
	require fully observed outcome data.
\end{remark}

\subsection{Theoretical Properties of the SHIM Prior}
\label{sec:theory}

We provide two theoretical characterizations of the SHIM prior. 
First, we characterize the dependence among coefficient magnitudes 
induced by the multiplicative shrinkage hierarchy. The following 
result shows that this dependence is determined by the number of 
shrinkage components shared by two coefficients, quantifying how the 
prior encourages related coefficients to exhibit similar magnitudes 
while retaining coefficient-specific variation.

\begin{proposition}
	\label{prop:dependence}
	Consider a multiplicative hierarchical shrinkage prior with
	independent positive random variables $\{H_\ell\}$. For 
	coefficient	$a$, let $\mathcal{P}_a$ be the finite set of 
	hierarchical components acting on that coefficient, and suppose
	\[
	\beta_a = s_aZ_a,
	\qquad
	s_a = \prod_{\ell \in \mathcal{P}_a}H_\ell,
	\]
	where $Z_a$ is a standard Gaussian variable that is independent 
	of the shrinkage hierarchy. Further, the Gaussian variables are 
	independent across distinct parameters. Assume $0 < 
	\operatorname{Var}\{\log(H_\ell)\} < \infty$ for every $\ell$. 
	Then, for two distinct coefficients $a \ne b$, we have
	\[
	\operatorname{Cov}\{\log|\beta_a|, \log|\beta_b|\} =
	\sum_{\ell \in \mathcal{P}_a \cap \mathcal{P}_b}
	\operatorname{Var}\{\log(H_\ell)\}
	\quad \text{and} \quad
	\operatorname{Var}\{\log|\beta_a|\} = \sum_{\ell \in \mathcal P_a}
	\operatorname{Var}\{\log(H_\ell)\} + \frac{\pi^2}{8}.
	\]
\end{proposition}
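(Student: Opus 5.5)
The plan is to take logarithms, which turns the multiplicative structure into an additive one:
\[
\log|\beta_a| = \sum_{\ell\in\mathcal P_a}\log H_\ell + \log|Z_a|.
\]
Each summand is then a random variable with finite variance. For $\log H_\ell$ this is the stated assumption. For $\log|Z_a|$ it is a known fact, and I would verify it as the one genuinely computational step. The components $\{H_\ell\}$ are mutually independent, and the $Z_a$ are independent of the hierarchy and of one another. So the whole collection $\{\log H_\ell\}\cup\{\log|Z_a|,\log|Z_b|\}$ consists of independent, square-integrable variables. Covariance is bilinear, so both identities reduce to bookkeeping over which summands are shared.

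For the covariance with $a\neq b$, I would expand
\[
\operatorname{Cov}\{\log|\beta_a|,\log|\beta_b|\}=\sum_{\ell\in\mathcal P_a}\sum_{\ell'\in\mathcal P_b}\operatorname{Cov}(\log H_\ell,\log H_{\ell'})+\text{cross terms with }\log|Z_a|,\log|Z_b|.
\]
Each term is handled by independence:
\begin{itemize}
\item A term with $\ell\neq\ell'$ vanishes, so only $\ell=\ell'\in\mathcal P_a\cap\mathcal P_b$ survives, and it contributes $\operatorname{Var}(\log H_\ell)$.
\item Every cross term involving $\log|Z_a|$ or $\log|Z_b|$ vanishes, because the Gaussians are independent of the hierarchy.
\item The term $\operatorname{Cov}(\log|Z_a|,\log|Z_b|)$ vanishes, because distinct coefficients carry independent Gaussians.
\end{itemize}
This gives the first formula. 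For the variance, the same expansion with $a=b$ gives $\sum_{\ell\in\mathcal P_a}\operatorname{Var}(\log H_\ell)+\operatorname{Var}(\log|Z_a|)$. The sum over $\mathcal P_a$ is well defined because $\mathcal P_a$ is finite.

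The main obstacle is computing $\operatorname{Var}(\log|Z|)=\pi^2/8$ for $Z\sim\mathcal N(0,1)$. I would write $\log|Z|=\tfrac12\log Z^2$ with $Z^2\sim\chi^2_1=\mathrm{Gamma}(1/2,\text{scale }2)$. For $W\sim\mathrm{Gamma}(k,\theta)$ we have $\operatorname{Var}(\log W)=\psi_1(k)$, the trigamma function. This follows by differentiating the log of the moment generating function $\mathbb E[W^t]=\theta^t\Gamma(k+t)/\Gamma(k)$ twice at $t=0$. Hence
\[
\operatorname{Var}(\log|Z|)=\tfrac14\psi_1(1/2).
\]
Using $\psi_1(1/2)=\sum_{n\ge0}(n+1/2)^{-2}=4\sum_{n\ge0}(2n+1)^{-2}=4\cdot\pi^2/8=\pi^2/2$, we obtain $\operatorname{Var}(\log|Z|)=\pi^2/8$. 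The same computation shows the variance is finite, which justifies the expansion. Finiteness can also be seen directly: $\log|Z|$ has all moments, since the density of $Z$ is bounded near $0$ and $|\log|z||^2$ is integrable there.
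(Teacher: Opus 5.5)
Your proposal is correct and follows the same route as the paper. The paper likewise writes $\log|\beta_a|=\sum_{\ell\in\mathcal P_a}\log H_\ell+\log|Z_a|$, uses the independence of the shrinkage components and the Gaussians, and cites $\operatorname{Var}(\log|Z|)=\pi^2/8$ as a known result while omitting the details. Your trigamma computation $\tfrac14\psi_1(1/2)=\pi^2/8$, together with the remark that only pairwise independence is needed for the covariance terms to vanish, supplies exactly the derivation the paper leaves out.
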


The identities in Proposition~\ref{prop:dependence} directly follow 
from
\[
\log|\beta_a| = \sum_{\ell \in \mathcal{P}_a}\log(H_\ell) + \log|Z_a|
\]
and the independence of the primitive shrinkage components and
Gaussian variables, together with the known result of 
$\operatorname{Var}(\log|Z_a|) =  \pi^2 / 8$. A detailed derivation 
is omitted.

For SHIM, the prior scale associated with $\beta_{kq}$ is $s_{kq}
= \tau\gamma_{m(k)}\delta_{m(k), g(k)}\lambda_k\xi_{kq}$. For a standard 
half-Cauchy variable $H$, we know $\operatorname{Var}(\log H) = \pi^2 
/ 4$. Because every coefficient-specific scale contains five 
independent half-Cauchy components, we get
\[
\operatorname{Var}\{\log|\beta_{kq}|\} = \frac{5 \pi^2}{4} + 
\frac{\pi^2}{8} = \frac{11 \pi^2}{8}.
\]
In what follows, we have
\[
\operatorname{Corr}
\{\log|\beta_{kq}|,\log|\beta_{k'q'}|\}
=
\begin{cases}
	8/11, & k = k',\ q \neq q',\\
	6/11, & k \neq k',\ m(k)=m(k'),\ g(k) = g(k'),\\
	4/11, & m(k) = m(k'),\ g(k) \neq g(k'),\\
	2/11, & m(k) \neq m(k').
\end{cases}
\]
These results imply that coefficients sharing more levels of the 
hierarchy have more strongly associated prior magnitudes. The shared 
global, modality, group, and predictor-specific scales induce this 
dependence, while the independent Gaussian variables and 
outcome-specific scales allow individual coefficients to retain 
distinct magnitudes and signs. This dependence provides a mechanism 
through which posterior learning about a shared scale can propagate 
across related coefficients.

We next characterize the marginal prior density induced when the 
conditional standard deviation of a Gaussian coefficient is the 
product of independent half-Cauchy variables. The result shows how 
the number of multiplicative components determines how rapidly the 
density diverges near zero and decays in the tails.

\begin{theorem}
	\label{thm:marginal_prior}
	Let $K \ge 1$ be an integer, and define
	\[
	H_1, H_2, \ldots , H_K \overset{\mathrm{ind}}{\sim}
	\mathcal C^{+}(0, 1), \qquad S_K=\prod_{\ell=1}^K H_\ell.
	\]
	Suppose	$\beta \mid S_K \sim \mathcal{N}(0, S_K^2)$	and let 
	$\pi_K$ denote the marginal density of $\beta$. Then, as
	$|\beta| \downarrow 0$, we have
	\[
	\pi_K(\beta) \sim C_{0, K} 
	\left\{\log\left(\frac{1}{|\beta|}\right)\right\}^K
	\quad \text{with} \quad
	C_{0, K} = \frac{(2/\pi)^K}{\sqrt{2\pi} K!}.
	\]
	Moreover, as $|\beta| \rightarrow \infty$, we have
	\[
	\pi_K(\beta) \sim C_{\infty, 
	K}\frac{\{\log|\beta|\}^{K-1}}{\beta^2}
	\quad \text{with} \quad
	C_{\infty,K} = \frac{(2/\pi)^K}{\sqrt{2\pi}\,(K-1)!}.
	\]
\end{theorem}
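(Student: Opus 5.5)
The plan is to pass through the density $g_K$ of the product $S_K$. I would first establish its behaviour at $0$ and at $\infty$ by induction on $K$, and then obtain $\pi_K$ as a Gaussian scale mixture. The half-Cauchy density is $f(h)=\frac{2}{\pi}(1+h^2)^{-1}$, so $g_1(s)\to 2/\pi$ as $s\downarrow 0$ and $g_1(s)\sim (2/\pi)s^{-2}$ as $s\to\infty$. The inductive claim is
\[
g_K(s)\sim \frac{(2/\pi)^K}{(K-1)!}\{\log(1/s)\}^{K-1}\ (s\downarrow 0),\qquad
g_K(s)\sim \frac{(2/\pi)^K}{(K-1)!}\frac{(\log s)^{K-1}}{s^2}\ (s\to\infty).
\]
The inductive step uses the Mellin convolution $g_K(s)=\int_0^\infty g_{K-1}(s/h)\,f(h)\,h^{-1}\,\mathrm{d}h$.

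For $s\downarrow 0$, I would split the $h$-integral at $h=s$ and $h=1$.
\begin{itemize}
\item On $h\in(s,1)$, the argument $s/h$ is small, and $f(h)\approx 2/\pi$. The integral $\int_s^1 \frac{2}{\pi}\{\log(h/s)\}^{K-2}\,h^{-1}\,\mathrm{d}h=\frac{2}{\pi}\{\log(1/s)\}^{K-1}/(K-1)$ produces the extra logarithm and the factor $1/(K-1)$.
\item The pieces $h<s$ and $h>1$ contribute only $O(\{\log(1/s)\}^{K-2})$. This uses the bound $g_{K-1}(u)\lesssim u^{-2}\log^{K-2}(2+u)$ for large $u$, and the decay $f(h)\lesssim h^{-2}$.
\end{itemize}
The case $s\to\infty$ is symmetric. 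One can either repeat the argument, or note that $1/H\overset{d}{=}H$ for a standard half-Cauchy variable, so $1/S_K\overset{d}{=}S_K$ and hence $g_K(s)=s^{-2}g_K(1/s)$. This identity transfers the small-$s$ result directly to large $s$, which I would use to halve the work.

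Next, write
\[
\pi_K(\beta)=\int_0^\infty (2\pi)^{-1/2}s^{-1}e^{-\beta^2/(2s^2)}g_K(s)\,\mathrm{d}s .
\]

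\emph{Behaviour as $|\beta|\downarrow 0$.} Split the integral at $s=|\beta|\omega(\beta)$ and at $s=1$, where $\omega(\beta)\to\infty$ slowly, for instance $\omega=\log(1/|\beta|)$.
\begin{itemize}
\item On the middle range the Gaussian factor is $1+o(1)$. There $\int s^{-1}g_K(s)\,\mathrm{d}s\sim \frac{(2/\pi)^K}{(K-1)!}\cdot\frac{\{\log(1/|\beta|)\}^K}{K}$, which gives the constant $C_{0,K}=(2/\pi)^K/\{\sqrt{2\pi}K!\}$.
\item The range $s>1$ is $O(1)$, since $\int_1^\infty s^{-1}g_K(s)\,\mathrm{d}s<\infty$.
\item The range $s<|\beta|\omega$ is $O(\{\log(1/|\beta|)\}^{K-1}\log\omega)$. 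To see this, substitute $s=|\beta|t$: the factor $t^{-1}e^{-1/(2t^2)}$ is integrable near $0$, and $g_K$ is of order $\log^{K-1}$ there.
\end{itemize}

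\emph{Behaviour as $|\beta|\to\infty$.} Substitute $s=|\beta|t$, which gives
\[
\pi_K(\beta)=(2\pi)^{-1/2}\int_0^\infty t^{-1}e^{-1/(2t^2)}g_K(|\beta|t)\,\mathrm{d}t .
\]
For fixed $t$, the tail asymptotic gives $g_K(|\beta|t)\sim \frac{(2/\pi)^K}{(K-1)!}(\log|\beta|)^{K-1}\beta^{-2}t^{-2}$. Since $\int_0^\infty t^{-3}e^{-1/(2t^2)}\,\mathrm{d}t=1$, this yields exactly $C_{\infty,K}$.

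To justify the limit, I would apply dominated convergence to $\beta^2(\log|\beta|)^{1-K}g_K(|\beta|t)$. This uses the bound $g_K(u)\lesssim u^{-2}\log^{K-1}(2+u)$ for $u\ge 1$, together with the ratio $\log(|\beta|t)/\log|\beta|$ being bounded on $t\in[|\beta|^{-1/2},\infty)$ after a polynomial-type estimate. The region $t<|\beta|^{-1/2}$ is killed by the factor $e^{-1/(2t^2)}$.

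The main obstacle I expect is the uniformity needed in these error estimates. Logarithmic factors vary slowly, so the lower-order pieces are only smaller by a factor of $\log\omega/\log(1/|\beta|)$ or by one power of the logarithm. Crude bounds therefore do not suffice, and the induction must carry explicit global bounds of the form $g_K(s)\le C\{1+|\log s|\}^{K-1}\min(1,s^{-2})$ alongside the sharp asymptotics. Proving these global bounds within the same induction is the first step I would do carefully.
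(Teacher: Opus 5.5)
Your argument is correct, but it takes a genuinely different route from the paper. The paper never works with the density of $S_K$. It computes the one-sided Mellin transform of $\pi_K$, which factorizes over the independent components as $\mathcal{M}_K(s)=\tfrac12\operatorname{E}(|Z|^{s-1})\prod_{\ell}\operatorname{E}(H_\ell^{s-1})=2^{(s-1)/2}\Gamma(s/2)\{\csc(\pi s/2)\}^K/(2\sqrt{\pi})$ on $0<\operatorname{Re}(s)<2$. It then observes a pole of order $K+1$ at $s=0$ and of order $K$ at $s=2$, both with leading coefficient $(2/\pi)^K/\sqrt{2\pi}$, and reads off both asymptotics from the converse mapping theorem after checking exponential decay of $\mathcal{M}_K$ on vertical lines.

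You stay in real variables instead. You run an induction for $g_K$ through the multiplicative convolution, and use the inversion symmetry $1/H\overset{d}{=}H$ to get $g_K(s)=s^{-2}g_K(1/s)$. This is the real-variable counterpart of the invariance of $\{\csc(\pi z/2)\}^K$, the Mellin transform of $g_K$, under $z\mapsto 2-z$. You then handle the Gaussian mixture directly. Near zero, the integral of $s^{-1}g_K(s)$ over $(|\beta|,1)$ adds one logarithm. At infinity, the kernel $t^{-3}e^{-1/(2t^2)}$ integrates to one and adds none. This is precisely the $K+1$ versus $K$ asymmetry in the paper's pole orders.

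The two routes trade off as follows. The Mellin route is much shorter and, pushed further, delivers full expansions with all lower-order logarithmic terms and a power-saving remainder, since the next poles are at $s=-2$ and $s=4$. However, it relies on an external theorem whose hypotheses must be checked: continuity of $\pi_K$, and meromorphic continuation with uniform decay in strips beyond each boundary. Your route is self-contained and elementary, but gives only the leading terms with $o(\cdot)$ errors and requires the global bounds you correctly identify as the main work.

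One point of care in the inductive step: on $h\in(s,1)$, neither the smallness of $s/h$ nor $f(h)\approx 2/\pi$ holds uniformly near the endpoints. Write $f(h)=2/\pi+O(h^2)$, substitute $u=s/h$, and integrate the asymptotic equivalence of $g_{K-1}(u)$ against $\mathrm{d}u/u$ over $(s,1)$. The edge effects are then of lower order, exactly as you anticipate.
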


The proof of Theorem~\ref{thm:marginal_prior} is provided in
Appendix~\ref{app:marginal_prior}. For SHIM, the coefficient-specific 
scale is the product of $K = 5$ independent
half-Cauchy variables. The theorem therefore gives
\[
\pi_{\mathrm{SHIM}}(\beta)
\sim
\begin{cases}
	\displaystyle
	\frac{4}{15\pi^5\sqrt{2\pi}}
	\left\{\log\left(\frac{1}{|\beta|}\right)\right\}^5,
	& \text{as } |\beta|\downarrow 0,
	\\
	\displaystyle
	\frac{4}{3\pi^5\sqrt{2\pi}}
	\frac{\{\log|\beta|\}^4}{\beta^2},
	& \text{as } |\beta| \rightarrow \infty.
\end{cases}
\]

Thus, the multilevel construction produces an integrable
fifth-order logarithmic singularity at zero and a heavy polynomial
tail modified by a fourth-order logarithmic factor. The density is
continuous on $\mathbb{R} \setminus \{0\}$ but diverges as
$|\beta| \downarrow 0$, assigning substantial density to coefficients
near zero. In the tails, $\pi_{\mathrm{SHIM}}(\beta)$ has the same
polynomial decay exponent as a Cauchy density, while the factor
$\{\log|\beta|\}^4$ makes it asymptotically heavier than a standard
Cauchy density. These features encourage strong shrinkage of
small coefficients while maintaining substantial prior support for
large effects.

\section{Simulations}
\label{sec:sim}

\subsection{Data Generation and Simulation Setup}
\label{sec:gen}

This section presents simulation studies designed to evaluate the
variable selection performance of the proposed method in comparison
with several competing approaches. Motivated from the Vanderbilt 
Memory and Aging Project~\cite{moore2020lower}, we generated 
$n=300$ independent observations with $P \in \{200,400\}$ predictors 
and $Q = 2$ continuous outcomes. The predictors were organized 
according to a two-level hierarchical structure consisting of
$M = 2$ modalities, each containing $G_1 = G_2 = 4$ groups of 
predictors. The two modalities contained equal numbers of predictors, 
and all groups were of equal size. That is, for $P = 200$, each 
modality contained $100$ predictors and each group contained $25$ 
predictors; for $P = 400$, each modality contained $200$ predictors 
and each group contained $50$ predictors.

To consider predictors with a hierarchical correlation structure, the 
rows of $\bm{X}$ were generated independently according to $\bm{x}_i
\sim \mathcal{N}_P(\mathbf{0}, \bm{\Sigma}_X)$. Specifically, the 
covariance between predictors $k$ and $k^{'}$ was specified as
\[
\mathrm{Cov}(x_{ik},x_{ik^{'}}) = \sigma_{\mathrm{mod}}^2 \, 
\mathbb{I}\{m(k) = m(k^{'})\} + \sigma_{\mathrm{grp}}^2 \, 
\mathbb{I}\{m(k) = m(k^{'}), g(k) = g(k^{'})\} + \sigma_{\mathrm{res}}^2 \mathbb{I}\{k 
= k^{'}\},
\]
where $m(k)$ and $g(k)$ respectively denote the modality and group 
membership of predictor $k$, and $\mathbb{I}\{\cdot\}$ denotes the 
standard indicator function. In addition, $\sigma_{\mathrm{mod}}^2$ 
represents the shared modality-level variation, 
$\sigma_{\mathrm{grp}}^2$ represents the additional
variation shared by predictors within the same group, and
$\sigma_{\mathrm{res}}^2$ denotes the residual variance. Under this 
specification, predictors within the same group shared both modality- 
and group-level variation, predictors in different groups within the 
same modality shared only modality-level variation, and predictors 
from different modalities were independent. We further set 
$\sigma_{\mathrm{mod}} = 0.8$, $\sigma_{\mathrm{grp}} = 0.6$, and 
$\sigma_{\mathrm{res}} = 0.8$, inducing correlations of approximately 
$0.6$ between distinct predictors within the same group and $0.4$ 
between predictors in different groups within the same modality. 
Following generation, each predictor was standardized to have sample 
mean zero and unit variance to mitigate the influence of predictor 
scale on variable selection~\cite{piironen2017sparsity}.

The outcome was generated according to $\bm{Y} = \bm{1}_n 
\bm{\alpha}^{\top} + \bm{X}\bm{B} + \bm{E}$, where $\bm{\alpha}$ was 
set to $(0.4, 0.4)^\top$, and the rows of $\bm{E}$ were independently 
generated from $\mathcal{N}_Q(\bm{0}, \bm{\Sigma}_Y)$ with
\[\bm{\Sigma}_Y = 
\begin{pmatrix}
	1 & 0.3\\
	0.3 & 1
\end{pmatrix},
\]
which corresponds to a residual correlation of $0.3$ between the two
outcomes. For simplicity, the simulation studies did not include 
low-dimensional adjustment covariates $\bm{Z}$. Since these 
covariates are treated as non-penalized components, the Gaussian 
update for $\bm{\Psi}$ is conjugate and does not alter the 
hierarchical shrinkage mechanism or variable-selection properties of 
the proposed prior on $\bm{B}$. This design choice also ensured a 
fair comparison among all competing methods described in 
Section~\ref{sec:competing}, as not all methods allow adjustment 
covariates to be included as non-penalized terms.

The $P\times Q$ coefficient matrix $\bm{B}$ was assumed to be sparse.
In the benchmark setting, four predictors were assigned nonzero
coefficients with two signal strengths. Specifically, two predictors
had coefficient vectors $(1.2,1.2)^\top$, whereas the remaining two
had coefficient vectors $(0.6,0.6)^\top$. All other entries of
$\bm{B}$ were set to zero. To evaluate the impact of the 
modality-level hierarchy, we considered two configurations for the 
allocation of the nonzero coefficients:
\begin{enumerate}
	\item Under the ``1M2G'' configuration, the four active 
	predictors were distributed across two groups within the same 
	modality, with one group containing the two stronger signals and 
	the other containing the two weaker signals.
	\item Under the ``2M2G'' configuration, the active
	predictors were distributed across two groups belonging to two
	different modalities, with one group containing the two stronger 
	signals and the other containing the two weaker signals. 
\end{enumerate}

Finally, missing values were imposed on the generated outcomes. In the
benchmark setting, we considered a missing-at-random (MAR) mechanism,
under which the probability that outcome $q$ was missing depended on a
fully observed predictor through a logistic regression model. For each
simulated dataset, the intercept of the logistic regression model was
calibrated to achieve a target proportion $\pi_{\mathrm{miss}} \in 
\{0.2, 0.6\}$ of subjects with at least one missing outcome, 
corresponding to moderate and high levels of missingness, 
respectively.

\begin{table}[tbp]
	\centering
	\setlength{\tabcolsep}{10pt}
	\caption{Benchmark simulation settings (8 scenarios): Unless 
	otherwise specified, all settings used $n = 300$, a MAR 
	missing-data mechanism, and four active predictors}
	\label{tab:setup}
	\begin{tabular}{lcccccccc}
		\toprule
		Dimension & \multicolumn{4}{c}{$P=200$} &
		\multicolumn{4}{c}{$P=400$}\\
		\cmidrule(lr){2-5}\cmidrule(lr){6-9}
		Configuration &
		\multicolumn{2}{c}{1M2G} &
		\multicolumn{2}{c}{2M2G} &
		\multicolumn{2}{c}{1M2G} &
		\multicolumn{2}{c}{2M2G}\\
		\cmidrule(lr){2-3}\cmidrule(lr){4-5}
		\cmidrule(lr){6-7}\cmidrule(lr){8-9}
		$\pi_{\mathrm{miss}}$
		& 0.2 & 0.6
		& 0.2 & 0.6
		& 0.2 & 0.6
		& 0.2 & 0.6 \\
		\midrule
		Scenario
		& S1 & S2
		& S3 & S4
		& S5 & S6
		& S7 & S8 \\
		\bottomrule
	\end{tabular}
\end{table}

Table~\ref{tab:setup} summarizes the eight benchmark simulation
settings. Each setting was replicated $100$ times. Additional
implementation details for the Bayesian methods, including the number
of burn-in iterations, posterior samples retained for inference, and
MCMC convergence diagnostics, will be described subsequently. SHIM 
was implemented using the open-source R package \texttt{shim}, which 
also documents random number seeds and other information necessary 
to reproduce all simulations reported in this section.

\subsection{Methods for Comparison and Implementation Details}
\label{sec:competing}

We compared SHIM with several competing methods representing 
different strategies for modeling multivariate outcomes and 
performing variable selection. First, to evaluate the benefit of 
jointly modeling multiple outcomes, we implemented a univariate 
counterpart of SHIM, denoted by SHIM\_U, which applies the proposed 
hierarchical shrinkage model separately to each outcome while 
retaining the same multilevel shrinkage structure. We further 
included two modern Bayesian approaches: Group Inverse-Gamma Gamma 
(GIGG)~\cite{boss2024GIGG} and Multivariate Bayesian Sparse Group 
Selection with Spike-and-Slab Priors 
(MBSGSSS)~\cite{liquet2017bayesian}. GIGG employs continuous 
shrinkage through inverse-gamma--gamma (IG--Gamma) priors, with the 
horseshoe prior arising as a special case. MBSGSSS is a multivariate 
extension of the Bayesian sparse group selection method based on 
spike-and-slab priors~\cite{xu2015bayesian}. For all Bayesian 
methods, including SHIM, SHIM\_U, GIGG, and MBSGSSS, we ran a total 
of $2{,}000$ MCMC iterations and discarded the first $500$ 
iterations as burn-in. These settings were chosen based on 
convergence diagnostics, in which trace plots of the active 
regression coefficients exhibited stable mixing after burn-in and 
autocorrelations decayed sufficiently rapidly. Representative trace 
plots are provided in Appendix~\ref{app:trace}.

Lasso~\cite{tibshirani1996regression} and group 
lasso~\cite{yuan2006model} were included as conventional 
frequentist regularization methods. Since both methods were 
originally developed for univariate outcomes, they were applied 
separately to each outcome. Lasso imposes coefficient-wise 
regularization and therefore performs variable selection at the 
individual level, whereas group lasso (glasso) imposes group-wise 
regularization and performs variable selection at the group level. 
Because both methods produce shrinkage estimates for the selected 
predictors, we further refitted an ordinary least squares (OLS) regression 
model using the selected predictors to obtain post-selection point 
estimates with reduced bias~\cite{belloni2013least}.

Except for SHIM and SHIM\_U, the competing methods do not directly 
accommodate incomplete outcomes. These methods were therefore fitted 
using a common complete-case sample consisting of observations with 
all observed data. In contrast, SHIM and SHIM\_U were fitted using 
all available observations through the proposed formulation for 
incomplete outcomes. Furthermore, GIGG, lasso, and glasso do 
not jointly model multivariate outcomes and were consequently fitted 
separately to each outcome. Finally, although glasso, GIGG, and 
MBSGSSS incorporate group-level structure, they do not accommodate 
the additional modality-level hierarchy considered by SHIM. 
Accordingly, only group membership information, but not modality 
membership information, was provided to these methods.

Variable selection was determined according to the estimation 
framework of each method. For continuous shrinkage methods, including 
SHIM, SHIM\_U, and GIGG, a predictor was considered selected for a 
given outcome if the corresponding 95\% marginal posterior credible 
interval excluded zero~\cite{boss2024GIGG, pas2017uncertainty}. For 
MBSGSSS, a predictor was considered selected if the posterior median 
of its regression coefficient was nonzero, following the general 
recommendation for spike-and-slab priors~\cite{liquet2017bayesian}. 
For lasso and glasso, predictors with nonzero estimated 
regression coefficients were considered selected. Table~\ref{tab:methods} summarizes the 
competing methods and their main characteristics.

\begin{table}[tbp]
\centering
\caption{Summary of the competing methods and their basic features}
\label{tab:methods}
\setlength{\tabcolsep}{10pt}
\begin{tabular}{lccc}
\toprule
Method & Modeling & Shrinkage Hierarchy \\
\midrule
SHIM & Multivariate & Multi-group \\
SHIM\_U & Univariate & Multi-group \\
Lasso & Univariate & Individual \\
Glasso & Univariate & Group \\
GIGG & Univariate & Group \\
MBSGSSS & Multivariate & Group \\
\bottomrule
\end{tabular}
\end{table}

\subsection{Evaluation Criteria}
\label{sec:eval}

Performance was evaluated with respect to variable selection and 
coefficient estimation. For all penalized methods (except MLM), we 
summarized variable-selection performance separately for each 
outcome using the true positive rate (TPR; sensitivity), the false 
positive rate (FPR), the false discovery rate (FDR; $1 - 
\text{precision}$), and the F1 score. The TPR was defined as the 
proportion of truly active predictors that were correctly selected, 
whereas the FPR was defined as the proportion of truly inactive 
predictors that were incorrectly selected. The FDR was defined as 
the proportion of selected predictors that were false positives, and 
the F1 score was defined as the harmonic mean of sensitivity and 
precision.

We further evaluated coefficient estimation performance for the four 
active predictors. We calculated the absolute percentage bias and 
empirical standard deviation of the coefficient estimates across 
simulation replicates. The empirical standard deviation was used to 
assess the efficiency of the estimators. To evaluate the calibration 
of uncertainty quantification, we calculated the ratio of the 
empirical standard deviation to the average standard error across 
simulation replicates. Finally, coverage probability was defined as 
the proportion of 95\% credible or confidence intervals containing 
the corresponding true coefficient and was compared with the nominal 
95\% level.

\subsection{Simulation Results}
\label{sec:sim_res}

\begin{table}[tbp]
	\centering
	\scriptsize
	\caption{Average variable-selection performance over $100$ 
	simulation replicates. Results are reported separately for each 
	outcome under the eight simulation settings. Performance is 
	evaluated using the true positive rate (TPR), false positive rate 
	(FPR), false discovery rate (FDR), and F1 score.}
	\label{tab:sel}
	\setlength{\tabcolsep}{3.37pt}
	\begin{tabular}{lcccccccccccccccc}
	\toprule
	& \multicolumn{8}{c}{S1} & \multicolumn{8}{c}{S2} \\
	\cmidrule(lr){2-9}
	\cmidrule(lr){10-17}
	& \multicolumn{4}{c}{Outcome 1}
	& \multicolumn{4}{c}{Outcome 2} 
	& \multicolumn{4}{c}{Outcome 1}
	& \multicolumn{4}{c}{Outcome 2} \\
	\cmidrule(lr){2-5}
	\cmidrule(lr){6-9}
	\cmidrule(lr){10-13}
	\cmidrule(lr){14-17}
	& TPR & FPR & FDR & F1
	& TPR & FPR & FDR & F1 
	& TPR & FPR & FDR & F1
	& TPR & FPR & FDR & F1 \\
	\midrule
	SHIM 
	& 1.00 & 0.00 & 0.01 & 0.99 
	& 1.00 & 0.00 & 0.01 & 0.99
	& 0.98 & 0.00 & 0.01 & 0.98 
	& 0.99 & 0.00 & 0.02 & 0.99 \\ 
	SHIM\_U 
	& 0.99 & 0.00 & 0.01 & 0.99 
	& 1.00 & 0.00 & 0.02 & 0.99
	& 0.96 & 0.00 & 0.00 & 0.97 
	& 0.98 & 0.00 & 0.01 & 0.98 \\ 
	Lasso 
	& 1.00 & 0.03 & 0.52 & 0.63 
	& 1.00 & 0.03 & 0.51 & 0.64
	& 1.00 & 0.03 & 0.53 & 0.62 
	& 1.00 & 0.03 & 0.55 & 0.61 \\ 
	Glasso 
	& 1.00 & 0.44 & 0.95 & 0.09 
	& 1.00 & 0.45 & 0.95 & 0.09
	& 1.00 & 0.38 & 0.94 & 0.11 
	& 1.00 & 0.35 & 0.94 & 0.11 \\ 
	GIGG 
	& 0.99 & 0.00 & 0.01 & 0.99 
	& 1.00 & 0.00 & 0.01 & 0.99
	& 0.87 & 0.00 & 0.00 & 0.92 
	& 0.86 & 0.00 & 0.01 & 0.92 \\  
	MBSGSSS 
	& 1.00 & 0.00 & 0.03 & 0.98 
	& 1.00 & 0.00 & 0.03 & 0.99
	& 0.99 & 0.00 & 0.06 & 0.97 
	& 1.00 & 0.00 & 0.06 & 0.97 \\
	
	\midrule
	& \multicolumn{8}{c}{S3} & \multicolumn{8}{c}{S4} \\
	\cmidrule(lr){2-9}
	\cmidrule(lr){10-17}
	& \multicolumn{4}{c}{Outcome 1}
	& \multicolumn{4}{c}{Outcome 2} 
	& \multicolumn{4}{c}{Outcome 1}
	& \multicolumn{4}{c}{Outcome 2} \\
	\cmidrule(lr){2-5}
	\cmidrule(lr){6-9}
	\cmidrule(lr){10-13}
	\cmidrule(lr){14-17}
	& TPR & FPR & FDR & F1
	& TPR & FPR & FDR & F1 
	& TPR & FPR & FDR & F1
	& TPR & FPR & FDR & F1 \\
	\midrule
	SHIM 
	& 1.00 & 0.00 & 0.02 & 0.99 
	& 1.00 & 0.00 & 0.01 & 1.00
	& 0.98 & 0.00 & 0.01 & 0.99 
	& 0.98 & 0.00 & 0.01 & 0.98 \\ 
	SHIM\_U 
	& 1.00 & 0.00 & 0.01 & 0.99 
	& 1.00 & 0.00 & 0.01 & 1.00
	& 0.96 & 0.00 & 0.00 & 0.98 
	& 0.97 & 0.00 & 0.00 & 0.98 \\ 
	Lasso 
	& 1.00 & 0.02 & 0.48 & 0.67 
	& 1.00 & 0.02 & 0.50 & 0.66
	& 0.99 & 0.04 & 0.55 & 0.60 
	& 1.00 & 0.03 & 0.53 & 0.63 \\ 
	Glasso 
	& 1.00 & 0.47 & 0.95 & 0.09 
	& 1.00 & 0.47 & 0.95 & 0.09
	& 1.00 & 0.39 & 0.94 & 0.11 
	& 1.00 & 0.41 & 0.95 & 0.10 \\ 
	GIGG 
	& 0.99 & 0.00 & 0.01 & 0.99 
	& 1.00 & 0.00 & 0.01 & 0.99
	& 0.84 & 0.00 & 0.01 & 0.90 
	& 0.85 & 0.00 & 0.01 & 0.91 \\  
	MBSGSSS 
	& 1.00 & 0.00 & 0.01 & 0.99 
	& 1.00 & 0.00 & 0.02 & 0.99
	& 0.99 & 0.00 & 0.04 & 0.98 
	& 0.99 & 0.00 & 0.04 & 0.97 \\
	
	\midrule
	& \multicolumn{8}{c}{S5} & \multicolumn{8}{c}{S6} \\
	\cmidrule(lr){2-9}
	\cmidrule(lr){10-17}
	& \multicolumn{4}{c}{Outcome 1}
	& \multicolumn{4}{c}{Outcome 2} 
	& \multicolumn{4}{c}{Outcome 1}
	& \multicolumn{4}{c}{Outcome 2} \\
	\cmidrule(lr){2-5}
	\cmidrule(lr){6-9}
	\cmidrule(lr){10-13}
	\cmidrule(lr){14-17}
	& TPR & FPR & FDR & F1
	& TPR & FPR & FDR & F1 
	& TPR & FPR & FDR & F1
	& TPR & FPR & FDR & F1 \\
	\midrule
	SHIM 
	& 0.99 & 0.00 & 0.03 & 0.98 
	& 1.00 & 0.00 & 0.02 & 0.99
	& 0.98 & 0.00 & 0.03 & 0.97 
	& 0.98 & 0.00 & 0.03 & 0.97 \\ 
	SHIM\_U 
	& 0.99 & 0.00 & 0.02 & 0.98 
	& 1.00 & 0.00 & 0.00 & 1.00
	& 0.93 & 0.00 & 0.02 & 0.95 
	& 0.94 & 0.00 & 0.02 & 0.96 \\ 
	Lasso 
	& 1.00 & 0.02 & 0.60 & 0.56 
	& 1.00 & 0.02 & 0.60 & 0.56
	& 0.99 & 0.02 & 0.61 & 0.55 
	& 1.00 & 0.02 & 0.62 & 0.54 \\ 
	Glasso 
	& 1.00 & 0.33 & 0.97 & 0.06 
	& 1.00 & 0.34 & 0.97 & 0.06
	& 1.00 & 0.35 & 0.97 & 0.06 
	& 1.00 & 0.36 & 0.97 & 0.06 \\ 
	GIGG 
	& 0.98 & 0.00 & 0.00 & 0.99 
	& 1.00 & 0.00 & 0.00 & 1.00
	& 0.82 & 0.00 & 0.02 & 0.88 
	& 0.82 & 0.00 & 0.01 & 0.89 \\  
	MBSGSSS 
	& 1.00 & 0.00 & 0.02 & 0.99 
	& 1.00 & 0.00 & 0.02 & 0.99
	& 0.99 & 0.00 & 0.05 & 0.97 
	& 0.99 & 0.00 & 0.06 & 0.96 \\
	
	\midrule
	& \multicolumn{8}{c}{S7} & \multicolumn{8}{c}{S8} \\
	\cmidrule(lr){2-9}
	\cmidrule(lr){10-17}
	& \multicolumn{4}{c}{Outcome 1}
	& \multicolumn{4}{c}{Outcome 2} 
	& \multicolumn{4}{c}{Outcome 1}
	& \multicolumn{4}{c}{Outcome 2} \\
	\cmidrule(lr){2-5}
	\cmidrule(lr){6-9}
	\cmidrule(lr){10-13}
	\cmidrule(lr){14-17}
	& TPR & FPR & FDR & F1
	& TPR & FPR & FDR & F1 
	& TPR & FPR & FDR & F1
	& TPR & FPR & FDR & F1 \\
	\midrule
	SHIM 
	& 1.00 & 0.00 & 0.03 & 0.99 
	& 1.00 & 0.00 & 0.02 & 0.99
	& 0.97 & 0.00 & 0.04 & 0.96 
	& 0.97 & 0.00 & 0.03 & 0.97 \\ 
	SHIM\_U 
	& 1.00 & 0.00 & 0.01 & 0.99 
	& 0.99 & 0.00 & 0.01 & 0.99
	& 0.94 & 0.00 & 0.01 & 0.96 
	& 0.95 & 0.00 & 0.02 & 0.96 \\ 
	Lasso 
	& 1.00 & 0.02 & 0.56 & 0.59 
	& 1.00 & 0.02 & 0.57 & 0.58
	& 1.00 & 0.02 & 0.64 & 0.51 
	& 1.00 & 0.02 & 0.65 & 0.51 \\ 
	Glasso 
	& 1.00 & 0.41 & 0.97 & 0.05 
	& 1.00 & 0.40 & 0.97 & 0.05
	& 1.00 & 0.41 & 0.97 & 0.05 
	& 1.00 & 0.42 & 0.97 & 0.05 \\ 
	GIGG 
	& 0.99 & 0.00 & 0.00 & 0.99 
	& 0.99 & 0.00 & 0.00 & 0.99
	& 0.81 & 0.00 & 0.01 & 0.88 
	& 0.81 & 0.00 & 0.01 & 0.88 \\  
	MBSGSSS 
	& 1.00 & 0.00 & 0.02 & 0.99 
	& 1.00 & 0.00 & 0.02 & 0.99
	& 0.99 & 0.00 & 0.04 & 0.97 
	& 0.99 & 0.00 & 0.04 & 0.97 \\
	\bottomrule
	\end{tabular}
\end{table}

Table~\ref{tab:sel} summarizes the variable-selection performance 
across the eight simulation settings. Overall, SHIM consistently 
achieved the best or nearly best performance across all scenarios, 
maintaining high TPR together with near-zero FPR and FDR, resulting 
in F1 scores close to one. Compared with its univariate counterpart 
SHIM\_U, SHIM showed slightly improved TPR and F1 scores in the more 
challenging settings with higher missingness (S2, S4, S6, and S8), 
suggesting that jointly modeling correlated outcomes provides 
additional gains when information is limited. Among the competing 
methods, GIGG effectively controlled false positives but experienced 
a noticeable reduction in TPR under the more challenging scenarios, 
whereas MBSGSSS maintained high TPR and consistently achieved F1 
scores comparable to those of SHIM. In contrast, lasso selected 
substantially more inactive predictors, resulting in substantially
inflated FDR despite perfect or nearly perfect TPR, whereas glasso 
presented extremely high FPR and FDR and consequently poor F1 
scores. Overall, these results demonstrate that SHIM provides an 
effective balance between sensitivity and false positive control, 
yielding robust variable-selection performance across a wide range of 
simulation settings. 

Figure~\ref{fig:box} displays the distribution of the F1 scores 
across the $100$ simulation replicates. Consistent with the average 
results in Table~\ref{tab:sel}, SHIM achieved the highest median F1 
scores with consistently small variability across nearly all 
simulation settings, implying stable variable-selection 
performance. MBSGSSS also demonstrated competitive performance, with 
F1 distributions that were generally comparable to those of SHIM. In 
contrast, GIGG showed a progressive decline in F1 scores as the 
simulation settings became more difficult, whereas lasso and glasso 
consistently yielded substantially lower F1 scores due to a large 
amount of false 
discoveries.

\begin{figure}[tbp]
	\centering
    \caption{Distribution of F1 scores across $100$ simulation 
    replicates under the eight simulation settings}
    \includegraphics[width=\textwidth]{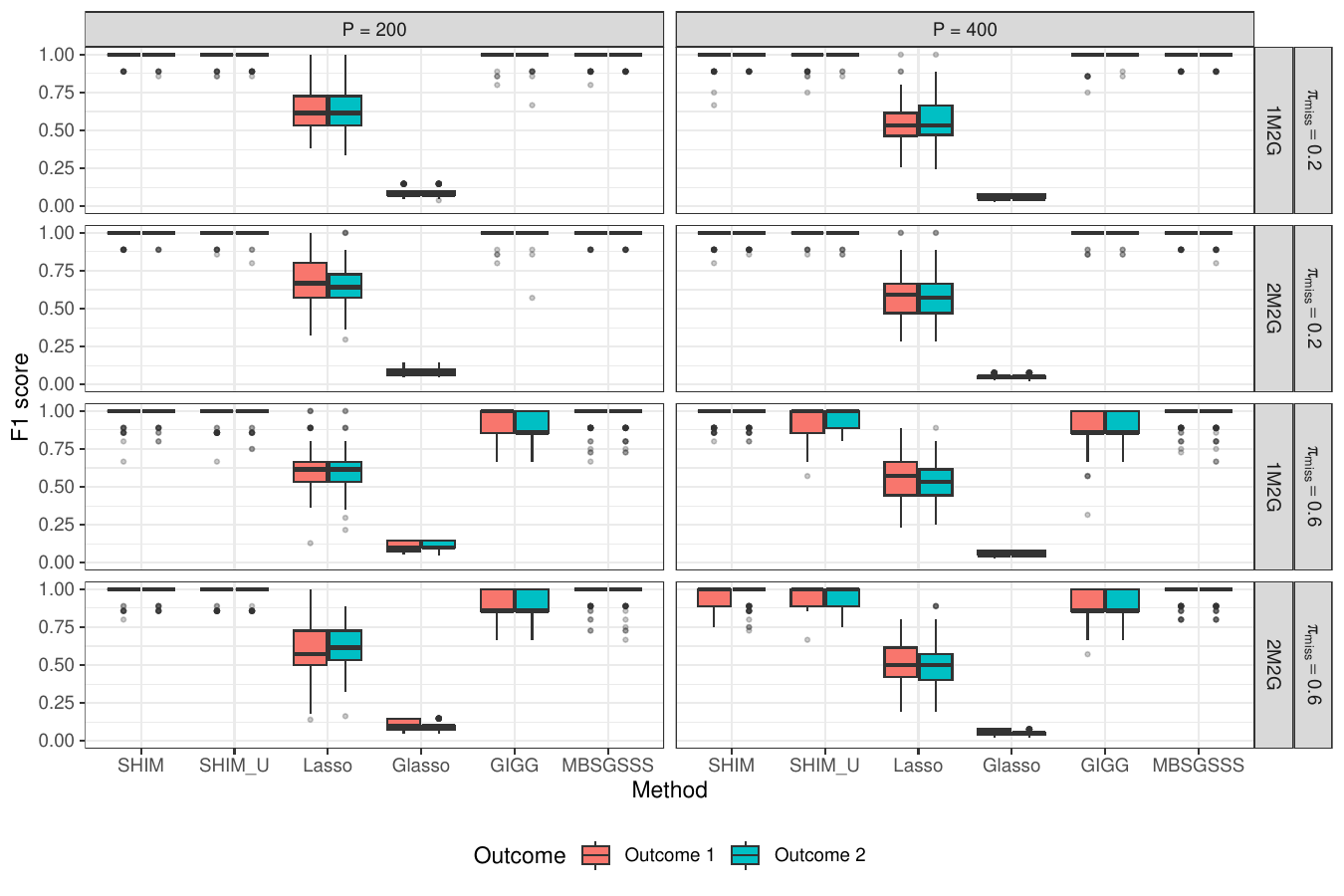}
    \label{fig:box}
\end{figure}

In addition, Figure~\ref{fig:est} presents the absolute percentage 
bias and coverage probabilities for the active regression 
coefficients. Across all methods, estimation was generally more 
accurate for the stronger signals ($\beta_1$ and $\beta_2$) than for 
the weaker signals ($\beta_3$ and $\beta_4$), particularly under high 
missingness (S2, S4, S6, and S8). SHIM, SHIM\_U, and MBSGSSS achieved 
comparably low bias for the stronger coefficients. For the weaker 
signals, bias increased for all three methods, although SHIM 
consistently yielded lower bias than its univariate counterpart 
SHIM\_U, particularly under the more challenging settings with larger 
predictor dimensions and higher missingness. MBSGSSS also 
demonstrated competitive performance and generally produced bias 
comparable to, or slightly smaller than, that of SHIM. In contrast, 
GIGG exhibited substantially larger bias for the weaker signals under 
high missingness, whereas lasso and glasso showed considerably 
larger bias across most simulation settings even though 
post-selection ordinary least squares estimation was implemented.

The coverage results followed similar patterns. SHIM, SHIM\_U, and 
MBSGSSS generally achieved coverage probabilities close to the 
nominal 95\% level across both outcomes, although some under-coverage 
occurred for the weaker coefficients under high missingness. Compared 
with SHIM\_U, SHIM more consistently maintained nominal coverage for 
the weaker signals in the more challenging settings. GIGG achieved 
approximately nominal coverage for the stronger coefficients but 
exhibited noticeable under-coverage for some weaker coefficients, 
particularly under high missingness. Lasso showed systematic 
under-coverage across most simulation settings. Although group lasso 
frequently attained near-nominal coverage among the available 
estimates, these results were accompanied by substantially greater 
empirical variability and less reliable standard error calibration; 
additional results on the empirical standard deviation and the ratio 
of empirical to estimated standard errors are provided in 
Appendix~\ref{app:add_sd}. Overall, SHIM provided accurate 
coefficient estimation together with well-calibrated interval 
estimation across a wide range of simulation settings.

\begin{figure}[tbp]
	\centering
	\caption{Absolute percentage bias (top) and coverage probability 
	(bottom) for the four active coefficients across the eight 
	simulation settings. Coefficients $\beta_1$ and $\beta_2$ 
	correspond to the two stronger signals (true coefficient $ = 
	1.2$), whereas $\beta_3$ and $\beta_4$ correspond to the two 
	weaker signals (true coefficient $ = 0.6$). 
	For lasso and group lasso, the reported summaries are based on 
	post-selection ordinary least squares estimates}
	\includegraphics[width=\textwidth]{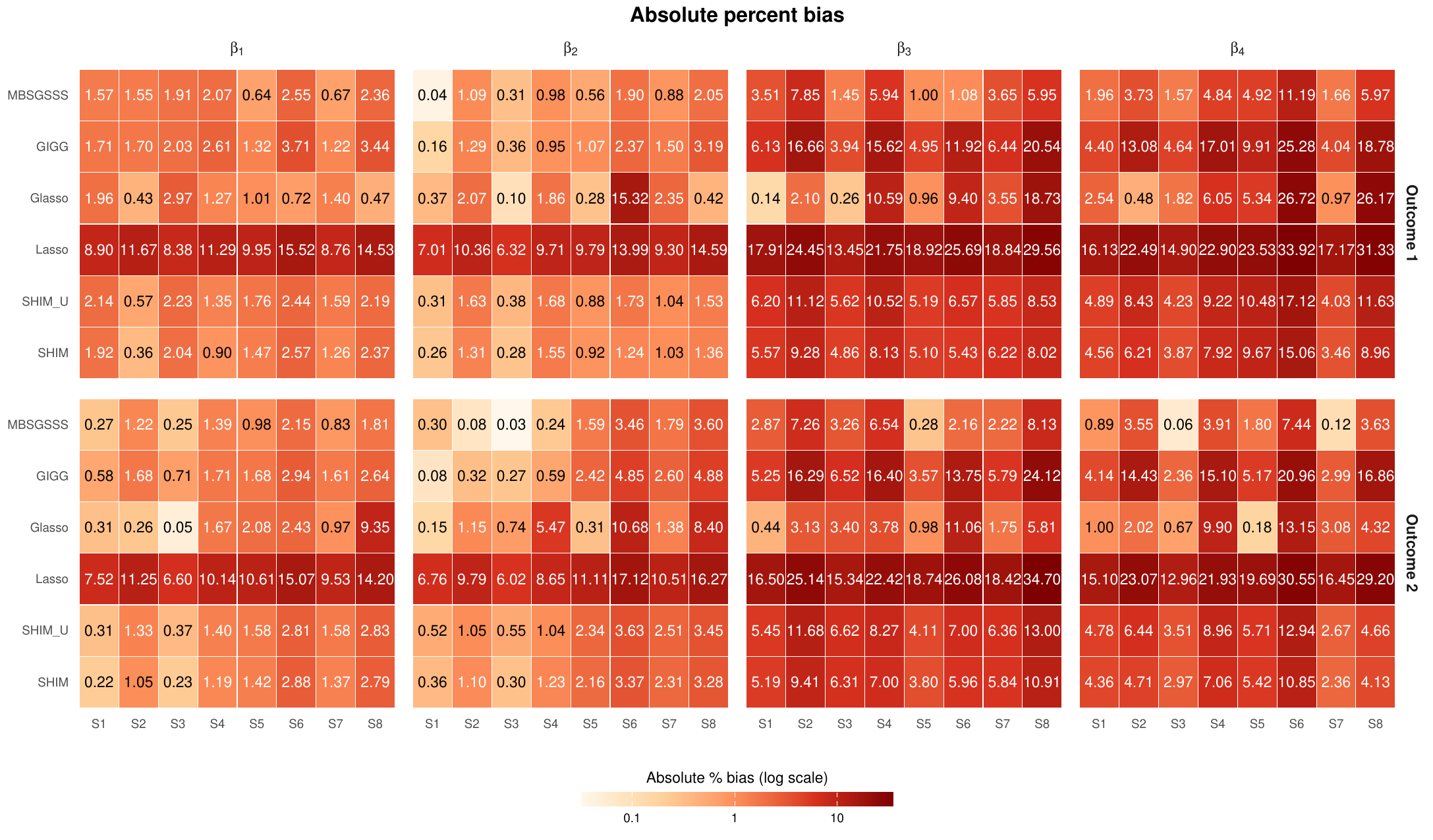}
	\includegraphics[width=\textwidth]{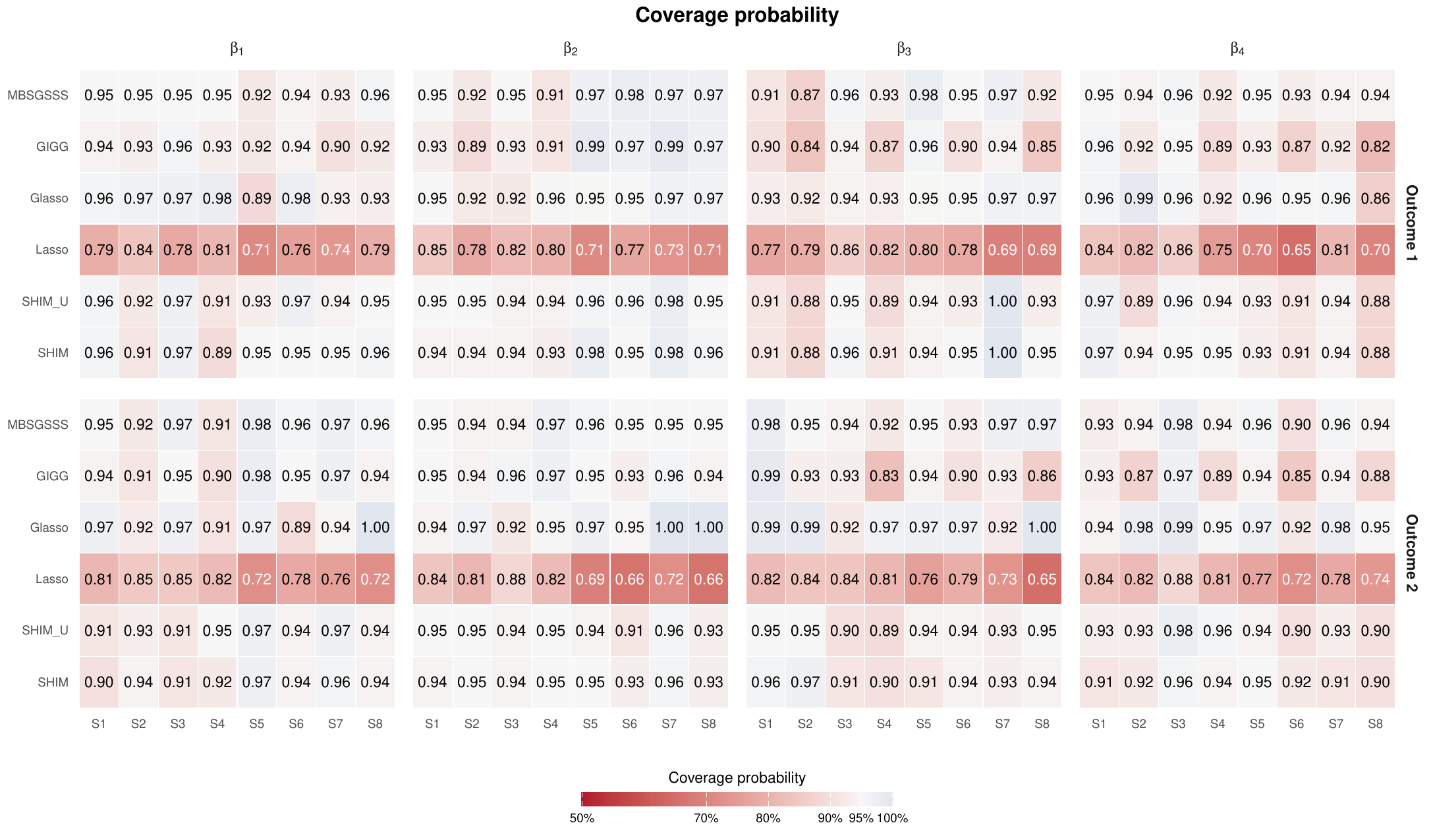}
	\label{fig:est}
\end{figure}

\subsection{Sensitivity Analysis}
\label{sec:sens}

To evaluate the robustness of the proposed framework under alternative
data-generating mechanisms, we considered four additional simulation
settings obtained by modifying S6 in Section~\ref{sec:gen}, detailed 
as follows:

\begin{enumerate}
	\item \textit{High outcome correlation (CORR)}: We increased the
	residual correlation between the two outcomes from $0.3$ to $0.7$.
	This setting evaluates the impact of stronger dependence among
	psychometric outcomes on variable selection and estimation.
	\item \textit{Missing not at random (MNAR)}: We considered an MNAR
	mechanism under which the probability that an outcome was missing
	depended on its potentially unobserved value together with the 
	value of the other outcome. The residual correlation between the 
	two	outcomes was set to $0.7$. This setting assesses the 
	robustness of the competing methods when the MAR assumption is 
	violated.
	\item \textit{Outcome-specific sparsity (OUT)}: We allowed the two
	outcomes to have different sets of active predictors. 
	Specifically, the two stronger signals had coefficient vectors
	$(1.2, 0)^\top$ and $(0, 1.2)^\top$, whereas the two weaker 
	signals had coefficient vectors $(0.6, 0)^\top$ and 
	$(0, 0.6)^\top$. This setting evaluates the ability of the method 
	to distinguish outcome-specific associations from predictors with 
	shared effects across outcomes. 
	\item \textit{Ultra-high dimensionality (ULTRA)}: We increased the
	number of predictors to $P = 600$, with two modalities containing 
	$300$ predictors each and each modality further divided into four 
	groups of $75$ predictors. This setting evaluates the scalability 
	of the competing methods under increasingly high-dimensional 
	predictors.
\end{enumerate}

Except for the modifications described above, all remaining
data-generating features were identical to those of S6, including 
$\pi_{\mathrm{miss}} = 0.6$. The same variable-selection and 
coefficient-estimation metrics used in the primary simulation study 
were calculated for each sensitivity setting.

The variable-selection results under CORR, MNAR, and ULTRA were 
generally consistent with those from the primary simulations. SHIM, 
SHIM\_U, and MBSGSSS continued to achieve comparably high F1 scores, 
whereas GIGG exhibited lower TPR values, and lasso and glasso 
continued to yield inflated FDR values; see Table~\ref{tab:sel_sens} 
in Appendix~\ref{app:res_sens} for details. Under OUT, however, 
MBSGSSS no longer performed comparably to SHIM. As shown in 
Table~\ref{tab:sel_out}, its average F1 scores decreased 
substantially, primarily because of increased FDR. This deterioration 
likely reflects the predictor-driven selection structure of MBSGSSS, 
under which sparsity is largely shared across outcomes and therefore 
may not adequately distinguish predictors associated with only a 
subset of responses. In contrast, SHIM incorporates outcome-specific 
local shrinkage and maintained strong selection performance under 
this particular scenario.

\begin{table}[tbp]
\centering
\caption{Average variable-selection performance under the
outcome-specific sparsity (OUT) setting over $100$ simulation replicates. Performance is evaluated using the true positive rate (TPR), false positive rate (FPR), false discovery rate (FDR), and F1 score}
\label{tab:sel_out}
\setlength{\tabcolsep}{9.5pt}
\begin{tabular}{lcccccccc}
\toprule
& \multicolumn{4}{c}{Outcome 1}
& \multicolumn{4}{c}{Outcome 2} \\
\cmidrule(lr){2-5}
\cmidrule(lr){6-9}
Method
& TPR & FPR & FDR & F1
& TPR & FPR & FDR & F1 \\
\midrule
SHIM
& 0.96 & 0.00 & 0.06 & 0.94
& 0.96 & 0.00 & 0.04 & 0.95 \\
SHIM\_U
& 0.95 & 0.00 & 0.01 & 0.96
& 0.94 & 0.00 & 0.01 & 0.95 \\
Lasso
& 1.00 & 0.01 & 0.48 & 0.65
& 0.99 & 0.01 & 0.53 & 0.60 \\
Glasso
& 0.97 & 0.33 & 0.98 & 0.03
& 0.98 & 0.37 & 0.99 & 0.03 \\
GIGG
& 0.84 & 0.00 & 0.00 & 0.89
& 0.80 & 0.00 & 0.01 & 0.86 \\
MBSGSSS
& 0.95 & 0.01 & 0.51 & 0.64
& 0.94 & 0.01 & 0.52 & 0.63 \\
\bottomrule
\end{tabular}
\end{table}

Coefficient-estimation patterns were generally similar to those in 
the primary simulations, although the OUT setting more clearly 
differentiated SHIM, SHIM\_U, and MBSGSSS. As shown in 
Table~\ref{tab:est_out}, MBSGSSS exhibited greater bias for weak 
signals, higher empirical variability, and more pronounced 
under-coverage. Its ratios of empirical to average estimated standard 
errors were also noticeably above one, suggesting that the estimated 
standard errors understated the sampling variability. This pattern 
may reflect difficulty in accommodating outcome-specific associations 
when selection of MBSGSSS is primarily structured at the predictor 
level. In contrast, SHIM and SHIM\_U generally provided accurate 
estimates with low empirical variability, although both showed 
increased bias for weak signals. SHIM also demonstrated a modest 
advantage over SHIM\_U in coverage and standard error calibration. 
Among the remaining methods, GIGG continued to exhibit greater bias 
and under-coverage for weak signals, lasso showed substantially 
greater bias and under-coverage across active coefficients, and 
glasso produced obviously greater empirical variability.

Across the other sensitivity settings, SHIM generally remained 
competitive in terms of bias and maintained relatively low empirical 
variability, particularly when the outcomes were highly correlated. 
Its performance deteriorated under MNAR, however, with increased bias 
and under-coverage. Under ULTRA, SHIM continued to perform reasonably 
well, although MBSGSSS appeared comparatively stronger. Full results 
and further discussion are provided in Table~\ref{tab:sel_sens} and \ref{tab:est_sens} of 
Appendix~\ref{app:res_sens}.

\begin{table}[tbp]
	\centering
	\caption{Estimation performance for the active predictors under OUT. Coefficients $\beta_1$ and $\beta_2$ 
	correspond to the two stronger signals (true coefficient $ = 
	1.2$), whereas $\beta_3$ and $\beta_4$ correspond to the two 
	weaker signals (true coefficient $ = 0.6$). Coefficients $\beta_1$ and $\beta_3$ are only associated with Outcome 1, whereas coefficients $\beta_2$ and $\beta_4$ are only associated with Outcome 2}
	\label{tab:est_out}
	\setlength{\tabcolsep}{3.3pt}
	\begin{tabular}{lcccccccccccccccc}
	\toprule
	& \multicolumn{16}{c}{Outcome 1} \\
	\cmidrule{2-17}
	& \multicolumn{4}{c}{PB (\%)}
	& \multicolumn{4}{c}{ESD}
	& \multicolumn{4}{c}{ESD / SE}
	& \multicolumn{4}{c}{CP (\%)} \\
	\cmidrule(lr){2-5}
	\cmidrule(lr){6-9}
	\cmidrule(lr){10-13}
	\cmidrule(lr){14-17}
	Method
	& $\beta_1$ & $\beta_2$ & $\beta_3$ & $\beta_4$
	& $\beta_1$ & $\beta_2$ & $\beta_3$ & $\beta_4$
	& $\beta_1$ & $\beta_2$ & $\beta_3$ & $\beta_4$
	& $\beta_1$ & $\beta_2$ & $\beta_3$ & $\beta_4$ \\
	\midrule

	SHIM
	& 2.6 &     & 7.4 &
	& 0.12 &      & 0.14 &
	& 0.94 &      & 1.03 &
	& 96 &    & 93 & \\

	SHIM\_U
	& 2.5 &     & 8.1 &
	& 0.11 &      & 0.14 &
	& 0.90 &      & 1.09 &
	& 94 &    & 90 & \\

	Lasso
	& 13.3 &      & 24.8 &
	& 0.17 &      & 0.14 &
	& 1.31 &      & 1.17 &
	& 73 &    & 69 & \\

	Glasso
	& 3.5 &      & 13.3 &
	& 0.42 &      & 0.46 &
	& 1.03 &      & 1.13 &
	& 94 &    & 87 & \\

	GIGG
	& 3.3 &      & 15.8 &
	& 0.16 &      & 0.19 &
	& 1.04 &      & 1.13 &
	& 93 &    & 90 & \\

	MBSGSSS
	& 3.8 &      & 13.6 &
	& 0.15 &      & 0.22 &
	& 1.01 &      & 1.37 &
	& 93 &    & 88 & \\
	\midrule
	
	& \multicolumn{16}{c}{Outcome 2} \\
	\cmidrule{2-17}
	& \multicolumn{4}{c}{PB (\%)}
	& \multicolumn{4}{c}{ESD}
	& \multicolumn{4}{c}{ESD / SE}
	& \multicolumn{4}{c}{CP (\%)} \\
	\cmidrule(lr){2-5}
	\cmidrule(lr){6-9}
	\cmidrule(lr){10-13}
	\cmidrule(lr){14-17}
	Method
	& $\beta_1$ & $\beta_2$ & $\beta_3$ & $\beta_4$
	& $\beta_1$ & $\beta_2$ & $\beta_3$ & $\beta_4$
	& $\beta_1$ & $\beta_2$ & $\beta_3$ & $\beta_4$
	& $\beta_1$ & $\beta_2$ & $\beta_3$ & $\beta_4$ \\
	\midrule

	SHIM
	&     & 3.8 &     & 11.6
	&     & 0.12 &     & 0.14
	&     & 1.00 &     & 1.04
	&     & 94 &     & 91 \\

	SHIM\_U
	&     & 3.6 &     & 12.6
	&     & 0.12 &     & 0.14
	&     & 1.02 &     & 1.09
	&     & 93 &     & 90 \\

	Lasso
	&     & 14.9 &     & 28.5
	&     & 0.15 &     & 0.16
	&     & 1.24 &     & 1.37
	&     & 68 &     & 64 \\

	Glasso
	&     & 5.7 &     & 7.0
	&     & 0.82 &     & 0.50
	&     & 2.05 &     & 1.24
	&     & 98 &     & 88 \\

	GIGG
	&     & 4.6 &     & 21.9
	&     & 0.13 &     & 0.22
	&     & 0.96 &     & 1.32
	&     & 94 &     & 85 \\

	MBSGSSS
	&     & 4.7 &     & 19.6
	&     & 0.12 &     & 0.23
	&     & 0.92 &     & 1.44
	&     & 94 &     & 85 \\
	\bottomrule
	\end{tabular}
\end{table}
\subsection{Posterior-Based Multiple Imputation for Downstream 
Analysis}
\label{sec:down}

To illustrate the use of posterior-based multiple imputation for
downstream analysis, we conducted an additional simulation in which a
partially observed variable $\bm{W}$ was subsequently used as a
predictor of a fully observed outcome $\bm{Y}$. Data were generated
according to
\[
\bm{W} = \alpha_W\bm{1}_n + \bm{X}\bm{\beta} + \bm{Z}\bm{\psi}_W
+ \bm{\varepsilon}_W \qquad \text{and} \qquad
\bm{Y} = \alpha_Y\bm{1}_n + \kappa\bm{W} + \bm{Z}\bm{\psi}_Y
+ \bm{\varepsilon}_Y
\]
with $\bm{\varepsilon}_W, \bm{\varepsilon}_Y \sim
\mathcal{N}_n(\bm{0}, \bm{I}_n)$, $\alpha_W = \alpha_Y = 0.4$, and 
$\kappa = 0.8$ representing the association of interest. The
high-dimensional predictors $\bm{X}$ and their coefficients were
generated following S6 (from Section~\ref{sec:gen}), while $\bm{Z}$ 
contained one standardized continuous covariate and one binary 
covariate with associated coefficients set to $\bm{\psi}_W = (0.5, 
0.3)^\top$ and $\bm{\psi}_Y = (0.6, 0.4)^\top$. Missing
values were imposed on $\bm{W}$ under the MAR mechanism with a target
missing proportion of $0.6$. This setting is motivated by applications
in which a partially observed biomarker, such as a cerebrospinal fluid
biomarker, is subsequently related to a fully observed psychometric
outcome.

SHIM was fitted jointly to $\bm{W}$ and $\bm{Y}$ conditional on
$\bm{X}$ and $\bm{Z}$, and $M=20$ completed datasets were obtained
from posterior draws of the missing values of $\bm{W}$. Within each
completed dataset, the downstream model regressed $\bm{Y}$ on
$\bm{W}$ and $\bm{Z}$ using ordinary least squares, and inference for
$\kappa$ was combined across imputations using Rubin's rules. The
procedure was repeated over $100$ simulation replicates to evaluate
estimation performance of $\kappa$.

Table~\ref{tab:down_res} summarizes the downstream estimation 
performance of posterior-based multiple imputation. For the 
association of interest ($\kappa$), the mean estimate showed low 
bias, and the average estimated standard error closely matched the 
empirical variability, although the confidence intervals exhibited 
modest under-coverage. The effects of the continuous and binary 
covariates ($\psi_{Y_1}$ and $\psi_{Y_2}$, respectively) were also 
estimated with reasonably low bias, with estimated standard errors 
comparable to their empirical counterparts and coverage probabilities 
close to the nominal $95\%$ level. Overall, these results suggest 
that posterior-based multiple imputation can support reasonably 
accurate downstream estimation and uncertainty quantification for 
both the association of interest and the covariate effects, even in 
this challenging setting with $60\%$ missing data.

\begin{table}[tbp]
\centering
\caption{Downstream estimation performance of posterior-based 
multiple imputation across $100$ simulation replicates, where 
$\kappa$ denotes the association of interest, whereas $\psi_{Y_1}$ 
and $\psi_{Y_2}$ denote the associations of the continuous and binary 
covariates with the outcome, respectively}
\setlength{\tabcolsep}{12.25pt}
\begin{tabular}{cccccc}
\toprule
Parameter & True Value & Estimate & ESD & ESD / SE & CP (\%) \\
\midrule
$\kappa$ & 0.8 & 0.78 & 0.03 & 0.95 & 90 \\
$\psi_{Y_1}$ & 0.6 & 0.64 & 0.10 & 0.98 & 94 \\
$\psi_{Y_2}$ & 0.4 & 0.41 & 0.16 & 0.94 & 95 \\
\bottomrule
\end{tabular}
\label{tab:down_res}
\end{table}

\section{Application}
\label{sec:app}

\subsection{Data Description}
\label{sec:data}

We applied SHIM to baseline data from the legacy cohort of the 
Vanderbilt Memory and Aging Project (VMAP), a longitudinal 
observational study designed to investigate the relationships among 
vascular health, brain aging, and cognitive impairment in older 
adults~\citep{moore2020lower}. VMAP collected comprehensive 
demographic, clinical, neuropsychological, fluid-biomarker, and 
multimodal neuroimaging data. The analytic sample included $288$ 
participants after excluding those with incomplete demographic 
information. Demographic and genetic characteristics of the analytic 
sample are summarized in Table~\ref{tab:vmap}.

\begin{table}[tbp]
	\centering
	\caption{Demographic and genetic characteristics of the VMAP
		analytic sample. NHW: non-Hispanic white; APOE-$\varepsilon 
		4$: apolipoprotein E epsilon 4}
	\label{tab:vmap}
	\setlength{\tabcolsep}{6.3pt}
	\begin{tabular}{lc}
		\toprule
		Characteristic & Overall ($n = 288$)\\
		\midrule
		Age, years & $72.8 \pm 7.4$\\
		Sex (female), \% & $43.4$\\
		Education, years & $15.9 \pm 2.7$\\
		Race/Ethnicity (NHW), \% & $87.2$ \\
		{\em APOE}-$\varepsilon 4$ carrier (positive), \% & $35.4$ \\
		\bottomrule
	\end{tabular}
\end{table}

We considered regional gray matter (GM) volume and cerebral blood 
flow (CBF) as complementary neuroimaging predictors. VMAP 
participants underwent brain magnetic resonance imaging at the 
Vanderbilt University Institute of Imaging Science using a 3T Philips 
Achieva system (Best, the Netherlands). Regional GM volumes were 
derived from T1-weighted magnetic resonance images, which were 
skull-stripped using SynthStrip \cite{hoopes2022synthstrip} and 
parcellated into anatomical regions of interest (ROIs) using 
NiChart\_DLMUSE~\cite{bashyam2025dlmuse}. Regional CBF estimates were 
derived from pseudo-continuous arterial spin-labeling (pCASL) images 
and corrected for partial-volume effects. Cerebrospinal fluid (CSF) 
$A\beta_{1-42}$ was measured using the Fujirebio INNOTEST assay, and 
was additionally considered as a biomarker associated with cerebral 
amyloid pathology~\cite{jack2018nia}. Detailed protocols for 
neuroimaging acquisition and processing and CSF collection are 
provided in Appendix~\ref{app:data}.

VMAP participants completed a comprehensive neuropsychological 
assessment covering episodic memory, executive function, language 
and information processing speed. The 
present analyses used VMAP composite scores for episodic memory and 
executive function. These standardized scores were constructed from 
item-level data using bifactor latent-variable models in which each 
measure loaded on both a general domain factor and a test-specific 
factor~\cite{mukherjee2023cognitive}. The episodic memory composite 
incorporated learning, recall, and recognition measures from the 
California Verbal Learning Test, Second Edition (CVLT-II), and the 
Biber Figure Learning Test (BFLT). The executive function composite 
incorporated the Delis--Kaplan Executive Function System (D-KEFS) 
Color--Word Inhibition, Tower, and Letter--Number Switching tests, 
together with Letter Fluency (FAS). 

\subsection{Analysis Procedure}
\label{sec:analysis}

We conducted two multivariate analyses. The first jointly 
considered memory and executive function, two related but distinct 
cognitive domains, to identify neuroimaging associations that were 
shared across or specific to each domain. The second jointly 
considered memory and CSF amyloid $A\beta_{1-42}$. This analysis 
had two objectives: to examine how the neuroimaging predictors were 
associated with cognitive function and AD pathology, and to 
investigate the downstream association between memory and CSF 
$A\beta_{1-42}$. Specific for the second analysis, because CSF 
$A\beta_{1-42}$ was unavailable for $53.1\%$ of the participants, 
SHIM was used to generate posterior-based multiple imputations of 
the missing data. The completed datasets were then used to estimate 
the association between memory and the CSF biomaker while accounting 
for uncertainty arising from the missing data.

The predictor hierarchy consisted of two neuroimaging modalities: 
regional GM and CBF, each containing $98$ predictors. Within each 
modality, predictors were further organized into five anatomical 
groups, including frontal, limbic, occipital, parietal, and 
temporal. This hierarchy was supplied to SHIM to guide hierarchical 
shrinkage and variable selection. All neuroimaging predictors were 
standardized prior to model implementation.

All analyses adjusted for age, sex, {\em APOE-$\varepsilon4$} 
carrier status, race and ethnicity, and years of education. 
Intracranial volume (ICV) was also included as a covariate in 
analyses to account for individual differences in head size. All 
adjusting covariates were included in SHIM as unpenalized variables. 
Participants in the analytic sample (of $288$) did not have missing 
data in the neuroimaging predictors. 

In the first analysis, we applied SHIM and its univariate 
counterpart, SHIM\_U, to episodic memory and executive function. 
SHIM\_U was included because it was among the most competitive 
competing alternative in the simulation study and accommodated the 
same unpenalized adjusting covariates as SHIM. MBSGSSS was not 
included in the analysis since it was not able to directly 
accommodate unpenalized covariates.

In the second analysis, we applied posterior-based multiple 
imputation by SHIM, complete-case analysis, and multiple imputation 
by chained equations (MICE)~\cite{vanbuuren2011mice}. For MICE, the 
default method was used, and for each imputation, $20$ completed 
datasets were generated. The imputation model included memory, 
neuroimaging predictors, and all adjusting covariates by 
convention~\cite{little2019statistical}. Within each completed 
dataset, memory was regressed on standardized CSF $A\beta_{1-42}$ 
with adjustment for the demographic and clinical covariates 
described above. The resulting estimates and variances were combined 
using Rubin's rules~\cite{rubin1987multiple}. 

\subsection{Results}
\label{sec:res}

In the first analysis, SHIM identified positive associations between 
GM in the left parahippocampal gyrus and episodic memory 
($\hat{\beta} = 0.243$, $95\%$ CI [credible interval]: 
$0.038$--$0.410$), and between GM in the left inferior frontal gyrus, 
pars triangularis and executive function ($\hat{\beta} = 
0.163$, $95\%$ CI: $0.011$--$0.277$). SHIM\_U identified only the 
association between left parahippocampal GM and episodic memory 
($\hat{\beta} = 0.254$, $95\%$ CI: $0.007$--$0.421$). No regional CBF 
measure met the selection criterion under either approach. Overall, 
SHIM selected two of the $392$ possible associations between the 
$196$ regional imaging measures and the two cognitive outcomes, both 
involving GM volume. This parsimonious result suggests that, after 
accounting jointly for the correlated imaging measures and 
covariates, posterior evidence was concentrated in a small number of 
anatomically specific regions. The selected regions were broadly 
consistent with previous neuroimaging evidence. Parahippocampal gray 
matter has been implicated in episodic memory in older adults, and 
reduced parahippocampal volume has been observed among individuals 
with memory impairment~\cite{echavarri2011atrophy, 
kohncke2021hippocampal}. Similarly, previous studies have linked 
inferior frontal GM volume to executive-function 
performance~\cite{lee2024baseline,perez2020subtle}. Notably, the 
executive-function association was identified under multivariate SHIM 
but not under SHIM\_U, underscoring the potential benefit of 
borrowing information across related cognitive outcomes in 
multivariate models.

In the second analysis, the estimated downstream association between 
episodic memory and CSF $A\beta_{1-42}$ is summarized in 
Table~\ref{tab:app2}. The pooled analysis based on SHIM posterior 
imputations indicated a positive association, with higher CSF 
$A\beta_{1-42}$ associated with better episodic-memory 
performance ($\hat{\beta} = 0.178$, $95\%$ CI [confidence interval]: 
$0.044$--$0.313$, p$=0.01$). This relationship is consistent with the 
established AD biomarker literature, where lower 
CSF $A\beta_{1-42}$ reflects greater cerebral amyloid 
deposition and is associated with poorer cognitive outcomes 
\cite{hansson2006association, li2014cross, rami2011cerebrospinal}.
The complete case analysis produced a similar point estimate but a 
greater standard error and a wider confidence interval, resulting in 
statistical insignificance at the nominal level. The lower precision 
indicated the loss of information from excluding the $153$ 
participants without observed CSF measurements. MICE produced a 
counter-intuitive result. With $196$ correlated imaging predictors 
but only $135$ observed CSF measurements, MICE was potentially 
sensitive to collinearity and imputation model specification. In 
contrast, SHIM jointly modeled the outcomes while applying 
hierarchical shrinkage to the imaging predictors and 
generating posterior imputations of the missing CSF measurements. In 
this application, the integrated approach produced an association 
estimate consistent with prior biological evidence and greater 
precision than the complete case analysis.

\begin{table}[tbh]
	\centering
	\caption{Estimated association between episodic memory and CSF 
	$A\beta_{1-42}$ using posterior-based multiple imputation by SHIM,
	complete case analysis, and MICE. CI: confidence interval}
	\label{tab:app2}
	\setlength{\tabcolsep}{6.9pt}
	\begin{tabular}{lcccc}
		\toprule
		Method & Sample Size & Estimate & 95\% CI & P-value \\
		\midrule
		SHIM & $288$ & $0.178$ & $(0.044, 0.313)$ & $0.010$ \\
		Complete case & $135$ & $0.178$ & $(-0.001, 0.357)$ & $0.054$ 
		\\
		MICE & $288$ & $-0.006$ & $(-0.014, 0.001)$ & $0.099$ \\
		\bottomrule
	\end{tabular}
\end{table}

\section{Discussion}
\label{sec:dis}

In this paper, we proposed SHIM, a Bayesian framework for structured 
variable selection in studies involving high-dimensional, multi-modal 
predictors and partially observed multivariate psychometric outcomes. 
Such settings increasingly arise when multiple related behavioral or 
cognitive constructs are studied alongside neuroimaging, genetic, or 
other high-dimensional auxiliary information. SHIM combines 
hierarchical horseshoe shrinkage with Bayesian treatment of the 
missing data, allowing the predictor hierarchy, dependence among the 
outcomes, and incomplete outcome data to be accommodated within a 
unified framework. The theoretical results formalize two properties 
of the proposed prior. Coefficients that are closer in the predictor 
hierarchy have more strongly associated prior magnitudes, and the 
marginal coefficient density combines substantial concentration near 
zero with heavy tails. Across the primary simulations and sensitivity 
analyses, SHIM generally achieved a balance between sensitivity and 
false positive control while providing accurate coefficient estimates 
and reasonably calibrated uncertainty quantification, although its 
performance deteriorated when the MAR assumption was violated. The 
VMAP application further illustrated two practical uses of the 
framework: borrowing information across related cognitive outcomes to 
identify neuroimaging associations and generating posterior-based 
multiple imputations for a downstream analysis involving an 
incompletely observed CSF biomarker.

The present study has several limitations that warrant further 
investigation. First, although SHIM accommodates missing outcomes, it 
currently requires the high-dimensional predictors and adjusting 
covariates to be fully observed, which may be restrictive in 
practice. Second, inference relies on the MAR assumption. The 
increased bias and under-coverage observed in the MNAR sensitivity 
setting indicate that performance may deteriorate when this 
assumption is severely violated. Third, the current formulation is 
primarily based on continuous outcomes modeled using a multivariate 
normal distribution. Extensions to binary, ordinal and mixed-type 
outcomes would broaden its applicability to psychometric research. 
Finally, SHIM requires the predictor hierarchy to be specified in 
advance, and its performance and interpretation may depend on how 
well that hierarchy reflects the underlying scientific structure. 
Future methodological work could therefore address incomplete 
predictors, non-Gaussian or latent psychometric outcomes, 
nonignorable missingness mechanisms, and more flexible or 
data-adaptive predictor hierarchies.

Establishing high-dimensional posterior theory for SHIM also remains 
an important direction for future research. The shared modality-, 
group-, and predictor-level scales induce dependence across the 
coefficient matrix, requiring joint prior-concentration arguments for 
hierarchically structured sparse signals rather than coordinate-wise 
calculations. The residual covariance matrix must also be learned 
jointly with the regression surface, while the observed-data 
likelihood under MAR varies across missingness patterns. A possible 
strategy is to combine general prior-concentration and testing 
frameworks for posterior contraction~\cite{ghosal2007convergence} 
with techniques developed for high-dimensional regression under 
continuous shrinkage priors~\cite{song2023nearly} and multivariate 
regression with an unknown covariance matrix~\cite{zhang2022ultra}. 
Establishing variable-selection consistency and valid uncertainty 
quantification would additionally require appropriate sparsity and 
signal-strength conditions~\cite{pas2017uncertainty, song2023nearly}. 
Once joint posterior contraction for the regression and covariance 
parameters has been established under the observed-data likelihood, 
corresponding contraction of the posterior predictive imputation 
distribution may follow from continuity of the conditional 
multivariate normal distribution. These developments would provide a 
fuller theoretical foundation for structured Bayesian variable 
selection and imputation in high-dimensional psychometric studies.

\begin{Backmatter}



\paragraph{Funding Statement}
This research was supported by grants from the NIH (R01AG034962).

\paragraph{Competing Interests}
T.J.H.\ serves on the Scientific Advisory Board of Circular Genomics, 
as Deputy Editor of \textit{Alzheimer's \& Dementia: Translational 
Research \& Clinical Interventions}, and as Section Editor of 
\textit{Alzheimer's \& Dementia}. A.L.J\ is an advisor for Medtronic, 
and Chair of the Observational Study Monitoring Board for the 
Diverse-VCID: White Matter Lesion Etiology of Dementia in Diverse 
Populations Study.

\paragraph{Data Availability Statement}
The VMAP data analyzed in this study are not publicly available. 
Qualified investigators may apply for access through the VMAP Data 
Sharing Portal at \url{https://vmacdata.org/}. Requests are subject 
to review and approval by the VMAP scientific committee and require 
completion of a data-use agreement. Code used to implement the 
analyses reported in this article is available at 
\url{https://github.com/zongyue-teng/shim}.

\paragraph{Ethical Standards}
The research meets all ethical guidelines, including adherence to the legal requirements of the study country.

\paragraph{Author Contributions}
Conceptualization: Z.T.; S.M.; T.J.H.; P.Z. Data curation: T.J.H.; 
A.L.J. Formal Analysis: Z.T. Funding Acquisition: A.L.J. Methodology: 
Z.T; S.M; P.Z. Software: Z.T.; P.Z. Supervision: P.Z. Validation: 
Z.T.; P.Z. Visualization: Z.T. Writing original draft: Z.T.; P.Z. 
Writing review \& editing: T.Z.; S.M.; T.J.H.; A.L.J.; P.Z. All 
authors approved the final submitted draft.

\bibliography{refs}

\end{Backmatter}

\begin{appendix}
	\appendix
	
	\section{Derivation of Gaussian Full Conditional Distributions}
	\label{app:conditionals}
	
	This appendix provides derivations of the Gaussian full 
	conditional	distributions for the regression coefficients and 
	intercept vector used in the Gibbs sampler, i.e., 
	Algorithm~\ref{alg:gibbs}. The remaining updates either follow
	directly from standard conjugate Bayesian analyses (e.g., the
	inverse-Wishart update for the residual covariance matrix) or 
	from the auxiliar-variable representation of the half-Cauchy 
	prior~\cite{makalic2016simple}, and are therefore omitted. 
	Throughout this appendix, ``others'' denotes conditioning on the 
	current values of all remaining unknown quantities.
	
	\subsection{Regression Coefficients}
	\label{app:conditionals_B}	
	Denote
	\[
	\bm{w} = \mathrm{vec} \left\{(\bm{Y} - 
	\bm{1}_n\bm{\alpha}^{\top} - \bm{Z}\bm{\Psi})^{\top}\right\} \qquad \text{and} 
	\qquad \bm{b} = \mathrm{vec}(\bm{B}^{\top}).
	\]
	Using the ``vec trick'' identity~\cite{harville1997kronecker} 
	given by
	\[
	\mathrm{vec}(\bm{M}_1 \bm{M}_2 \bm{M}_3) = (\bm{M}_3^{\top} 
	\otimes \bm{M}_1) \mathrm{vec}(\bm{M}_2),
	\]
	where $\otimes$ denotes the Kronecker product, we re-express the 
	$\bm{w}$ as
	\[\bm{w} = \mathrm{vec} \left\{(\bm{Y} - 
	\bm{1}_n\bm{\alpha}^{\top} - \bm{Z}\bm{\Psi})^{\top}\right\} = 
	\mathrm{vec}(\bm{B}^{\top} \bm{X}^{\top}) + 
	\mathrm{vec}(\bm{E}^{\top}) \\
	= (\bm{X} \otimes \bm{I}_Q) \bm{b} + \mathrm{vec}(\bm{E}^{\top}),
	\]
	which suggests $\bm{w} \mid \bm{b}, \bm{\Sigma} \sim
	\mathcal{N}_{nQ} (\bm{H}\bm{b},\bm{R})$
	with $\bm{H} := \bm{X} \otimes \bm{I}_Q$ and $\bm{R}
	:= \bm{I}_n \otimes \bm{\Sigma}$.
	
	Since the prior distribution of $\bm{b}$ is specified as $\bm{b}
	\sim \mathcal{N}_{PQ} (\bm{0},\bm{D}_{\bm{B}})$, the posterior 
	density is given by
	\begin{align*}
		p(\bm{b} \mid \mathrm{others}) &\propto p(\bm{w} \mid \bm{b}, 
		\bm{\Sigma}) p(\bm{b}) \\
		&\propto \exp \left\{-\frac{1}{2} (\bm{w} - 
		\bm{H}\bm{b})^{\top} \bm{R}^{-1} (\bm{w} - \bm{H}\bm{b})
		- \frac{1}{2} \bm{b}^{\top} \bm{D}_{\bm{B}}^{-1} 
		\bm{b}\right\} \\
		&\propto \exp \left\{-\frac{1}{2} \bm{b}^{\top} \left(
		\bm{H}^{\top} \bm{R}^{-1} \bm{H} + \bm{D}_{\bm{B}}^{-1}
		\right)	\bm{b} + \bm{b}^{\top} \bm{H}^{\top} \bm{R}^{-1}
		\bm{w} \right\},
	\end{align*}
	which leads to 
	\[
	\bm{b} \mid \mathrm{others} \sim
	\mathcal{N}_{PQ}(\bm{\mu}_{\bm{B}}, \bm{\Sigma}_{\bm{B}}),
	\]
	with
	\[
	\bm{\Sigma}_{\bm{B}} = \left(\bm{H}^{\top} \bm{R}^{-1}
	\bm{H} + \bm{D}_{\bm{B}}^{-1} \right)^{-1} \qquad \text{and}
	\qquad \bm{\mu}_{\bm{B}} = \bm{\Sigma}_{\bm{B}}
	\bm{H}^{\top} \bm{R}^{-1} \bm{w}
	\]
	according to the multivariate normal distribution kernel. 
	Finally, plugging $\bm{H}$ and $\bm{R}$ in the expressions of 
	$\bm{\Sigma}_{\bm{B}}$ and $\bm{\mu}_{\bm{B}}$ to get
	\[
	\bm{\Sigma}_{\bm{B}} = \left((\bm{X} \otimes 
	\bm{I}_Q)^{\top} (\bm{I}_n \otimes \bm{\Sigma})^{-1} (\bm{X} 
	\otimes \bm{I}_Q) + \bm{D}_{\bm{B}}^{-1}\right)^{-1} \\
	= \left((\bm{X}^{\top} \bm{X}) \otimes \bm{\Sigma}^{-1} + 
	\bm{D}_{\bm{B}}^{-1}\right)^{-1} 
	\]
	and
	\[
	\bm{\mu}_{\bm{B}} = \bm{\Sigma}_{\bm{B}} (\bm{X} \otimes 
	\bm{I}_Q)^{\top} (\bm{I}_n \otimes \bm{\Sigma})^{-1} \bm{w} = 
	\bm{\Sigma}_{\bm{B}} (\bm{X}^{\top} \otimes 
	\bm{\Sigma}^{-1}) \bm{w},
	\]	
	which completes the derivations.
	
	\subsection{Intercept Vector and Covariate Coefficients}
	\label{app:conditionals_alpha}
	
	To derive the joint full conditional distribution of the intercept 
	and covariate coefficient matrix $\widetilde{\bm{\Psi}}$, consider
	\[
	\bm{u} = \mathrm{vec} \left\{(\bm{Y} - \bm{X}\bm{B})^{\top} \right\} \qquad \text{and} 
	\qquad \tilde{\bm{\psi}} = \mathrm{vec}(\widetilde{\bm{\Psi}}^{\top}).
	\]
	Similarly, we can re-express $\bm{u}$ as
	\[\bm{u} =
	\mathrm{vec}(\widetilde{\bm{\Psi}}^{\top} \widetilde{\bm{Z}}^{\top}) + 
	\mathrm{vec}(\bm{E}^{\top}) \\
	= (\widetilde{\bm{Z}} \otimes \bm{I}_Q) \tilde{\bm{\psi}} + \mathrm{vec}(\bm{E}^{\top}),
	\]
	which suggests $\bm{u} \mid \tilde{\bm{\psi}}, \bm{\Sigma} \sim
	\mathcal{N}_{nQ} (\bm{G}\tilde{\bm{\psi}},\bm{R})$
	with $\bm{G} := \widetilde{\bm{Z}} \otimes \bm{I}_Q$ and $\bm{R}
	= \bm{I}_n \otimes \bm{\Sigma}$.

	Given the specified prior for $\widetilde{\bm{\Psi}}$, denote 
	$\tilde{\bm{\psi}}_0 = \mathrm{vec}(\widetilde{\bm{\Psi}}_0^{\top})$ and 
	$\bm{D}_{\widetilde{\bm{\Psi}}} = \widetilde{\bm{V}}_0 \otimes \bm{I}_Q$ for notation
	simplicity. The posterior density is given by	
	\begin{align*}
		p(\tilde{\bm{\psi}} \mid \mathrm{others}) &\propto p(\bm{u} \mid \tilde{\bm{\psi}}, 
		\bm{\Sigma}) p(\tilde{\bm{\psi}}) \\
		&\propto \exp \left\{-\frac{1}{2} (\bm{u} - 
		\bm{G}\tilde{\bm{\psi}})^{\top} \bm{R}^{-1} (\bm{u} - \bm{G}\tilde{\bm{\psi}})
		- \frac{1}{2} (\tilde{\bm{\psi}} - \tilde{\bm{\psi}}_0)^{\top} 
		\bm{D}_{\widetilde{\bm{\Psi}}}^{-1} 
		(\tilde{\bm{\psi}} - \tilde{\bm{\psi}}_0)\right\} \\
		&\propto \exp \left\{-\frac{1}{2} \tilde{\bm{\psi}}^{\top} \left(
		\bm{G}^{\top} \bm{R}^{-1} \bm{G} + \bm{D}_{\widetilde{\bm{\Psi}}}^{-1}
		\right)	\tilde{\bm{\psi}}+ \tilde{\bm{\psi}}^{\top} \left(\bm{G}^{\top} \bm{R}^{-1} \bm{u} + \bm{D}_{\widetilde{\bm{\Psi}}}^{-1} \tilde{\bm{\psi}}_0 \right) \right\},
	\end{align*}	
	which corresponds to 
	\[
	\tilde{\bm{\psi}} \mid \mathrm{others} \sim
	\mathcal{N}_{(L+1)Q}(\bm{\mu}_{\widetilde{\bm{\Psi}}}, \bm{\Sigma}_{\widetilde{\bm{\Psi}}}),
	\]
	with
	\[
	\bm{\Sigma}_{\widetilde{\bm{\Psi}}} =  \left(
		\bm{G}^{\top} \bm{R}^{-1} \bm{G} + \bm{D}_{\widetilde{\bm{\Psi}}}^{-1}
		\right)^{-1} \qquad \text{and}
	\qquad \bm{\mu}_{\widetilde{\bm{\Psi}}} = \bm{\Sigma}_{\widetilde{\bm{\Psi}}}
	\left(\bm{G}^{\top} \bm{R}^{-1} \bm{u} + \bm{D}_{\widetilde{\bm{\Psi}}}^{-1} \tilde{\bm{\psi}}_0 \right).
	\]
	Plugging in for $\bm{G}$, $\bm{R}$, $\bm{D}_{\widetilde{\bm{\Psi}}}$, $\bm{u}$ and $\tilde{\bm{\psi}}_0$ leads to
	\[
	\bm{\Sigma}_{\widetilde{\bm{\Psi}}}= \left((\widetilde{\bm{Z}} \otimes 
	\bm{I}_Q)^{\top} (\bm{I}_n \otimes \bm{\Sigma})^{-1} (\widetilde{\bm{Z}}
	\otimes \bm{I}_Q) + ( \widetilde{\bm{V}}_0 \otimes \bm{I}_Q)^{-1}\right)^{-1} \\
	= \left((\widetilde{\bm{Z}}^{\top} \widetilde{\bm{Z}}) \otimes \bm{\Sigma}^{-1} + 
	\widetilde{\bm{V}}_0^{-1} \otimes \bm{I}_Q\right)^{-1} 
	\]
	and
	\[
	\bm{\mu}_{\widetilde{\bm{\Psi}}} = 
	\bm{\Sigma}_{\widetilde{\bm{\Psi}}} 
	\left\{(\widetilde{\bm{Z}}^\top\otimes\bm{\Sigma}^{-1})\mathrm{vec}
	\left((\bm{Y} - \bm{X}\bm{B})^\top\right) + (\widetilde{\bm{V}}_0^{-1}\otimes\bm{I}_Q) \mathrm{vec}
	(\widetilde{\bm{\Psi}}_0^\top) \right\}.
	\]	
	This completes the derivations.

	\section{Proof of Theorem~\ref{thm:marginal_prior}}
	\label{app:marginal_prior}
	
	\begin{proof}
		For $x > 0$, define the one-sided Mellin transform of the 
		symmetric density $\pi_K$ by
		\[
		\mathcal{M}_K(s) = \int_0^\infty x^{s - 1} \pi_K(x) 
		\,\mathrm{d}x.
		\]
		By symmetry, we get
		\[
		\mathcal M_K(s)	= \frac{1}{2}\operatorname{E}\left(|\beta|^{s 
		- 1}\right).
		\]
		Write
		\[
		\beta = S_KZ, \qquad S_K = \prod_{\ell = 1}^K H_\ell,
		\]
		where $Z \sim \mathcal{N}(0, 1)$ and $H_1, \ldots, H_K 
		\overset{\mathrm{ind}}{\sim} \mathcal{C}^+(0, 1)$
		are mutually independent. It follows that
		\[
		\mathcal{M}_K(s) = \frac{1}{2} \operatorname{E}\left(|Z|^{s - 
		1}\right)\prod_{\ell = 1}^K	\operatorname{E}\left(H_\ell^{s 
		- 1}\right).
		\]
		
		For a standard half-Cauchy variable, we know
		\[
		\operatorname{E}\left(H_\ell^{s - 1}\right)
		= \frac{2}{\pi}\int_0^\infty\frac{h^{s - 1}}{1 + h^2} 
		\,\mathrm{D}h =	\csc\left(\frac{\pi s}{2}\right)
		\]
		with $0 < \operatorname{Re}(s) < 2$.
		For $Z \sim \mathcal{N}(0, 1)$, we have
		\[
		\operatorname{E}\left(|Z|^{s-1}\right) = \frac{2^{(s - 
		1)/2}\Gamma(s/2)}{\sqrt{\pi}}
		\]
		with
		$\operatorname{Re}(s) > 0$. Therefore, we get
		\begin{equation}
			\label{eq:mellin_pi}
			\mathcal{M}_K(s) =	\frac{2^{(s - 1)/2}\Gamma(s/2)}
			{2\sqrt{\pi}}\left\{\csc\left(\frac{\pi s}{2}\right)
			\right\}^K
		\end{equation}
		for $0 < \operatorname{Re}(s) < 2$, which is the fundamental 
		strip of $\mathcal{M}_K$. For convenience, define
		\[
		A_K	= \frac{1}{\sqrt{2\pi}}\left(\frac{2}{\pi}\right)^K.
		\]
		
		We first consider the behavior near zero. As $s \rightarrow 
		0$, we have
		\begin{align*}
			2^{(s - 1)/2} &= 2^{-1/2}\{1 + O(s)\}, \\
			\Gamma(s/2) &= \frac{2}{s}\{1 + O(s)\}, \\
			\csc\left(\frac{\pi s}{2}\right) &=	\frac{2}{\pi s}\{1 + 
			O(s^2)\}.
		\end{align*}
		Substitution into Equation~\eqref{eq:mellin_pi} gives
		\[
		\mathcal M_K(s)	= A_K s^{-(K + 1)}\{1 + O(s)\}.
		\]
		Thus, $\mathcal{M}_K$ has a pole of order $K + 1$ at the left
		boundary $s = 0$. The converse mapping theorem for Mellin 
		transforms \cite{flajolet1995mellin} therefore gives
		\[
		\pi_K(x) \sim 
		\frac{A_K}{K!}\left\{\log\left(\frac{1}{x}\right)\right\}^K,
		\]
		as $\qquad x \downarrow0$.
		
		We next consider the tail behavior. Set $w = 2 - s$. As
		$w \rightarrow 0$, we have
		\begin{align*}
			2^{(s - 1)/2} &= 2^{(1 - w)/2} = \sqrt{2}\{1 + O(w)\},\\
			\Gamma(s/2) &= \Gamma(1 - w/2) = 1 + O(w), \\
			\csc\left(\frac{\pi s}{2}\right) &= \csc\left(\frac{\pi 
			w}{2}\right) = \frac{2}{\pi w}\{1 + O(w^2)\}.
		\end{align*}
		Hence, we get
		\[
		\mathcal{M}_K(2 - w) = A_K w^{-K}\{1 + O(w)\}.
		\]
		Equivalently, $\mathcal{M}_K(s)$ has a pole of order $K$ at 
		the right boundary $s=2$. A second application of the 
		converse mapping theorem yields
		\[
		\pi_K(x) \sim \frac{A_K}{(K - 1)!}\frac{(\log x)^{K - 
		1}}{x^2},
		\]
		as $x \rightarrow \infty$. Symmetry of $\pi_K$ allows $x$ to 
		be replaced by $|\beta|$.
		
		It remains to verify the analytic conditions required by the
		converse mapping theorem. The expression in
		Equation~\eqref{eq:mellin_pi} extends meromorphically beyond 
		its
		fundamental strip. Moreover, for fixed real $\sigma$ on a 
		vertical
		line avoiding its poles, Stirling's formula and the standard
		asymptotics of the cosecant function give
		\[
		\left|\Gamma\left(\frac{\sigma+it}{2}\right)\right|	=
		O\left((1+|t|)^{\sigma/2 - 1/2} e^{-\pi|t|/4}
		\right)
		\qquad \text{and}
		\qquad
		\left|
		\csc\left(\frac{\pi(\sigma+it)}{2}\right)
		\right|	= O\left(e^{-\pi|t|/2}\right)
		\]
		as $|t| \rightarrow \infty$. Thus, $\mathcal{M}_K(\sigma+it)$
		decays exponentially on such vertical lines, which completes 
		the proof.
	\end{proof}

	\section{Example Trace Plots for Bayesian Methods}
	\label{app:trace}
	
	Figure~\ref{fig:trace} presents representative trace plots and 
	for all four Bayesian methods under one benchmark simulation 
	setting. Similar convergence behavior was observed across the 
	remaining simulation settings.
	
	\begin{figure}[tbp]
		\centering
   		 \caption{Example trace plots of the four active regression 
   		 coefficients for the four Bayesian methods under Scenario 
   		 S6, based on $2{,}000$ MCMC iterations with the first $500$ 
   		 iterations discarded as burn-in}
    	\includegraphics[width=\textwidth]{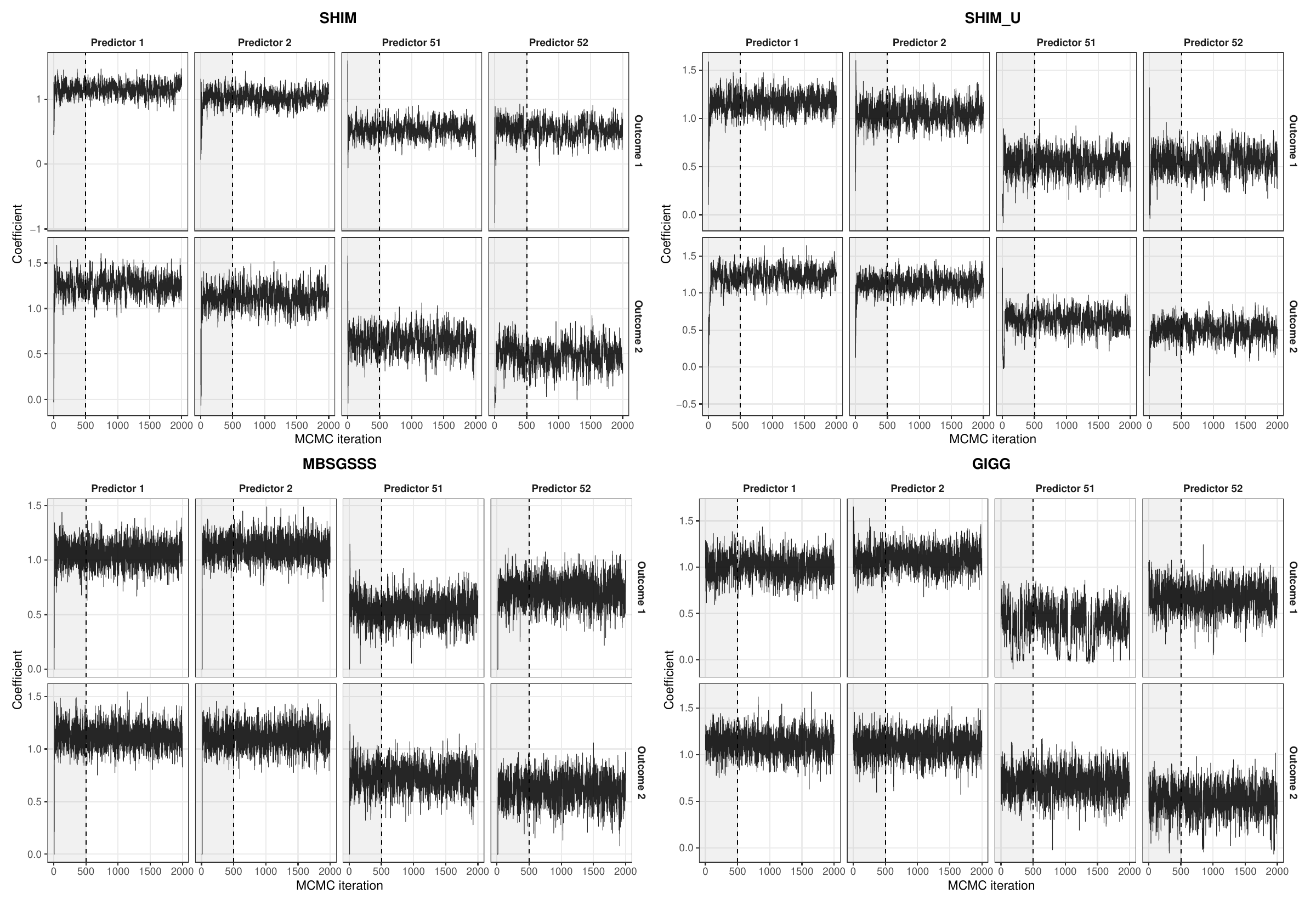}
    	\label{fig:trace}
	\end{figure}
	
	\section{Additional Estimation Results}
	\label{app:add_sd}
	
	For completeness, this appendix reports two additional estimation
	results that were not presented in the main text. 
	Figure~\ref{fig:sd} shows the empirical standard deviation of the 
	coefficient estimates and the ratio of the empirical standard 
	deviation to the average estimated standard error. Overall, SHIM, 
	SHIM\_U, and MBSGSSS exhibited comparable empirical	standard 
	deviations, with slightly larger variability observed for the
	weaker coefficients under the more challenging simulation 
	settings. GIGG demonstrated obviously greater variability for the 
	weaker signals, whereas glasso generally produced the largest 
	empirical standard deviations, particularly when both predictor 
	dimensionality and missingness were high. The 
	empirical-to-estimated standard error ratios for SHIM remained 
	close to one across nearly all simulation settings, indicating 
	good agreement between empirical sampling variability and 
	model-based uncertainty estimates. Similar behavior
	was observed for SHIM\_U and MBSGSSS, whereas larger deviations 
	from one were found for lasso, glasso, and GIGG in the more 
	challenging settings.
	
	\begin{figure}[tbp]
	\centering
    	\caption{Empirical standard deviation (top) and 
    	empirical-to-estimated standard error (bottom) for the four 
    	active coefficients across the eight simulation settings. 
    	Coefficients $\beta_1$ and $\beta_2$ correspond to the two 
    	stronger signals (true coefficient $ = 1.2$), whereas 
    	$\beta_3$ and $\beta_4$ correspond to the two weaker signals 
    	(true coefficient $ = 0.6$).  For lasso and group lasso, the reported summaries 
    	are based on post-selection ordinary least squares estimates}
    \includegraphics[width=\textwidth]{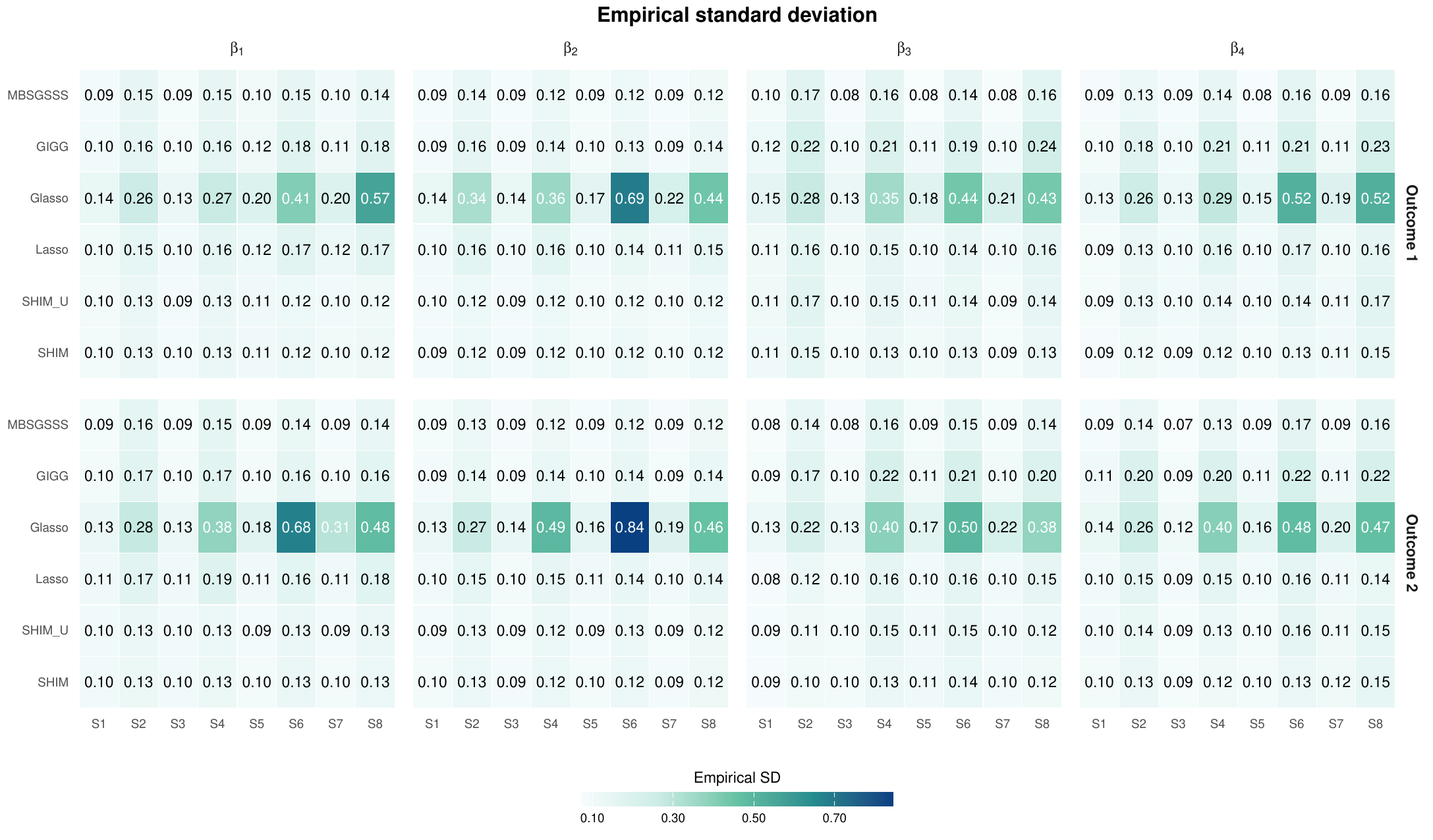}
    \includegraphics[width=\textwidth]{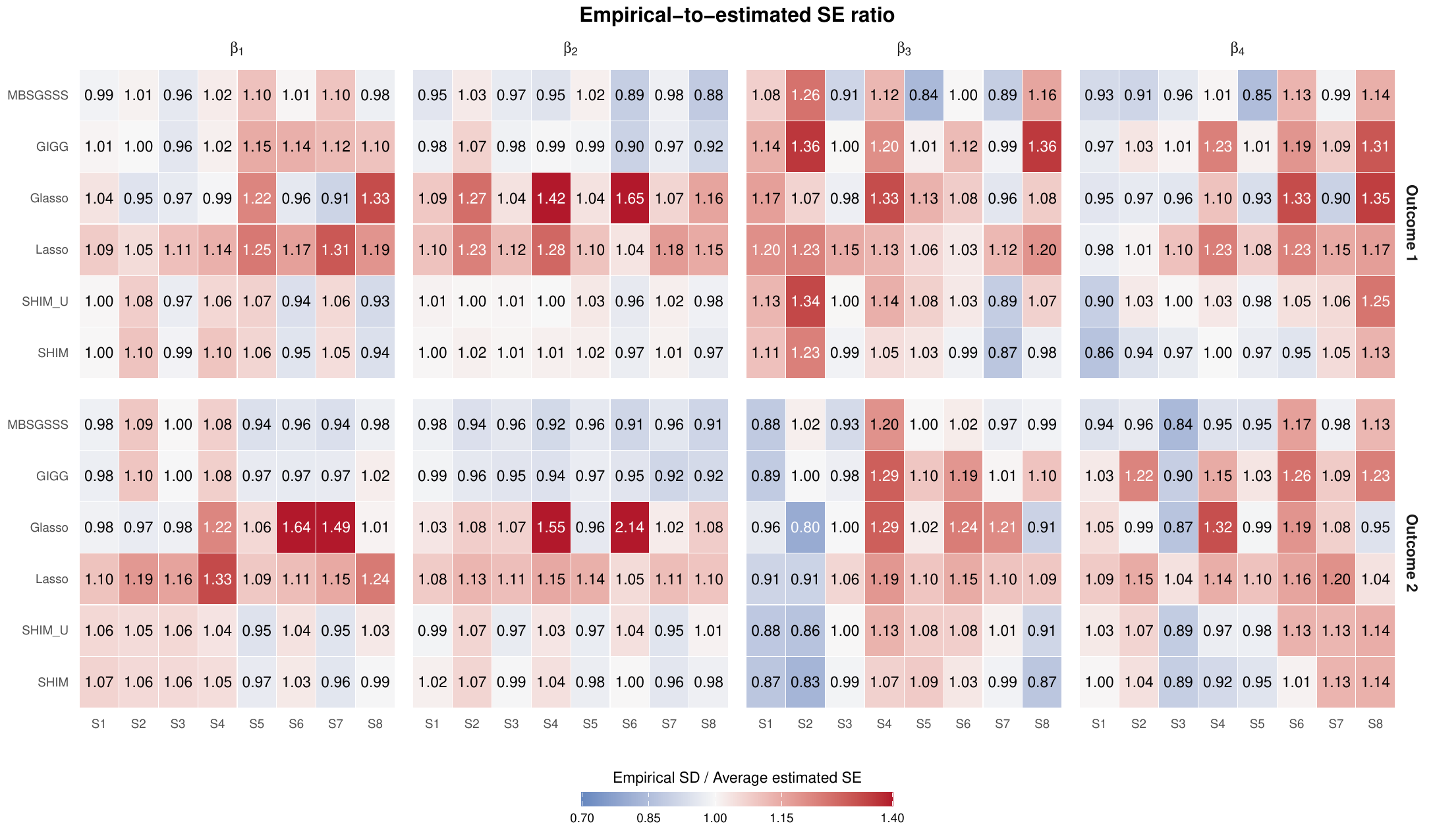}
    \label{fig:sd}
\end{figure}
	
\section{Full Results for Sensitivity Analysis}
\label{app:res_sens}

Table~\ref{tab:sel_sens} provides the full variable-selection results. As discussed in Section~\ref{sec:sens}, SHIM and SHIM\_U maintained high F1 scores across all sensitivity settings, with SHIM consistently showing higher TPR values than SHIM\_U. MBSGSSS was comparable to SHIM and SHIM\_U under most settings except OUT, where its F1 scores deteriorated substantially, primarily because of increased FDR. GIGG continued to exhibit lower TPR values, whereas lasso and glasso yielded inflated FDR values.

Table~\ref{tab:est_sens} provides the full 
coefficient-estimation results. Across settings, GIGG 
demonstrated greater bias and under-coverage for the weak signals, 
lasso showed persistent bias and under-coverage across the active 
coefficients, and glasso exhibited substantially greater empirical 
variability and poorer standard error calibration. In particular, 
for ULTRA, no valid post-selection OLS estimates were available for 
glasso because the number of selected predictors was consistently at 
least as large as the complete-case sample size.

Under CORR, SHIM, SHIM\_U, and MBSGSSS produced comparably low bias for the strong signals but greater bias for some weak signals. However, SHIM and SHIM\_U exhibited lower empirical standard deviations than MBSGSSS and the other competing methods, particularly for the weak signals, with SHIM generally showing slightly smaller values than SHIM\_U. Their empirical-to-average estimated standard error ratios were generally close to one, and coverage was close to the nominal 95\% level. In contrast, MBSGSSS showed noticeably poorer standard error calibration and some under-coverage for the weak signals.

Similar patterns were observed under MNAR, although bias increased 
and coverage deteriorated for most methods. Nevertheless, SHIM and 
SHIM\_U retained competitive bias and relatively low empirical 
variability. In contrast, MBSGSSS exhibited greater bias even for 
the strong signals, together with more pronounced standard error 
miscalibration and under-coverage for the weak signals. 

Under ULTRA, SHIM, SHIM\_U, and MBSGSSS performed comparably in 
terms of bias and empirical variability, with low bias for the 
strong signals but greater bias for some weak signals. All three 
methods showed some standard error miscalibration and under-coverage 
for the weak signals. MBSGSSS generally exhibited the greatest 
miscalibration, whereas SHIM remained competitive in calibration and 
coverage.

\begin{table}[tbp]
	\centering
	\scriptsize
	\caption{Average variable-selection performance over $100$
	simulation replicates. Results are reported separately for each
	outcome under the four sensitivity analysis settings. Performance
	is evaluated using the true positive rate (TPR), false positive rate
	(FPR), false discovery rate (FDR), and F1 score.}
	\label{tab:sel_sens}
	\setlength{\tabcolsep}{3.37pt}
	\begin{tabular}{lcccccccccccccccc}
	\toprule
	& \multicolumn{8}{c}{CORR}
	& \multicolumn{8}{c}{MNAR} \\
	\cmidrule(lr){2-9}
	\cmidrule(lr){10-17}
	& \multicolumn{4}{c}{Outcome 1}
	& \multicolumn{4}{c}{Outcome 2}
	& \multicolumn{4}{c}{Outcome 1}
	& \multicolumn{4}{c}{Outcome 2} \\
	\cmidrule(lr){2-5}
	\cmidrule(lr){6-9}
	\cmidrule(lr){10-13}
	\cmidrule(lr){14-17}
	& TPR & FPR & FDR & F1
	& TPR & FPR & FDR & F1
	& TPR & FPR & FDR & F1
	& TPR & FPR & FDR & F1 \\
	\midrule
	SHIM
	& 0.98 & 0.00 & 0.01 & 0.98
	& 0.98 & 0.00 & 0.01 & 0.98
	& 0.94 & 0.00 & 0.01 & 0.96
	& 0.95 & 0.00 & 0.01 & 0.97 \\
	SHIM\_U
	& 0.94 & 0.00 & 0.00 & 0.96
	& 0.94 & 0.00 & 0.02 & 0.96
	& 0.92 & 0.00 & 0.03 & 0.94
	& 0.91 & 0.00 & 0.02 & 0.94 \\
	Lasso
	& 0.99 & 0.02 & 0.61 & 0.55
	& 1.00 & 0.02 & 0.62 & 0.54
	& 0.98 & 0.02 & 0.63 & 0.52
	& 0.98 & 0.02 & 0.65 & 0.50 \\
	Glasso
	& 1.00 & 0.35 & 0.97 & 0.06
	& 1.00 & 0.37 & 0.97 & 0.06
	& 1.00 & 0.35 & 0.97 & 0.06
	& 1.00 & 0.36 & 0.97 & 0.06 \\
	GIGG
	& 0.82 & 0.00 & 0.01 & 0.89
	& 0.82 & 0.00 & 0.01 & 0.89
	& 0.77 & 0.00 & 0.01 & 0.85
	& 0.77 & 0.00 & 0.01 & 0.86 \\
	MBSGSSS
	& 0.96 & 0.00 & 0.06 & 0.95
	& 0.96 & 0.00 & 0.06 & 0.95
	& 0.95 & 0.00 & 0.06 & 0.94
	& 0.95 & 0.00 & 0.06 & 0.94 \\

	\midrule
	& \multicolumn{8}{c}{OUT}
	& \multicolumn{8}{c}{ULTRA} \\
	\cmidrule(lr){2-9}
	\cmidrule(lr){10-17}
	& \multicolumn{4}{c}{Outcome 1}
	& \multicolumn{4}{c}{Outcome 2}
	& \multicolumn{4}{c}{Outcome 1}
	& \multicolumn{4}{c}{Outcome 2} \\
	\cmidrule(lr){2-5}
	\cmidrule(lr){6-9}
	\cmidrule(lr){10-13}
	\cmidrule(lr){14-17}
	& TPR & FPR & FDR & F1
	& TPR & FPR & FDR & F1
	& TPR & FPR & FDR & F1
	& TPR & FPR & FDR & F1 \\
	\midrule
	SHIM
	& 0.96 & 0.00 & 0.06 & 0.94
	& 0.96 & 0.00 & 0.04 & 0.95
	& 0.95 & 0.00 & 0.04 & 0.95
	& 0.96 & 0.00 & 0.04 & 0.95 \\
	SHIM\_U
	& 0.95 & 0.00 & 0.01 & 0.96
	& 0.94 & 0.00 & 0.01 & 0.95
	& 0.91 & 0.00 & 0.01 & 0.94
	& 0.92 & 0.00 & 0.01 & 0.95 \\
	Lasso
	& 1.00 & 0.01 & 0.48 & 0.65
	& 0.99 & 0.01 & 0.53 & 0.60
	& 0.99 & 0.02 & 0.65 & 0.50
	& 0.99 & 0.02 & 0.66 & 0.50 \\
	Glasso
	& 0.97 & 0.33 & 0.98 & 0.03
	& 0.98 & 0.37 & 0.99 & 0.03
	& 1.00 & 0.33 & 0.98 & 0.04
	& 1.00 & 0.32 & 0.98 & 0.04 \\
	GIGG
	& 0.84 & 0.00 & 0.00 & 0.89
	& 0.80 & 0.00 & 0.01 & 0.86
	& 0.80 & 0.00 & 0.01 & 0.87
	& 0.80 & 0.00 & 0.01 & 0.88 \\
	MBSGSSS
	& 0.95 & 0.01 & 0.51 & 0.64
	& 0.94 & 0.01 & 0.52 & 0.63
	& 0.98 & 0.00 & 0.04 & 0.96
	& 0.98 & 0.00 & 0.04 & 0.97 \\
	\bottomrule
	\end{tabular}
\end{table}

\begin{table}[tbp]
	\centering

	\caption{Estimation performance for the active predictors under
		CORR, MNAR and ULTRA}
	\label{tab:est_sens}
	\footnotesize
	\setlength{\tabcolsep}{3.3pt}

	\begin{tabular}{lcccccccccccccccc}
	\toprule

	\textbf{CORR}
	& \multicolumn{16}{c}{Outcome 1} \\
	\cmidrule(lr){2-17}
	& \multicolumn{4}{c}{PB (\%)}
	& \multicolumn{4}{c}{ESD}
	& \multicolumn{4}{c}{ESD / SE}
	& \multicolumn{4}{c}{CP (\%)} \\
	\cmidrule(lr){2-5}
	\cmidrule(lr){6-9}
	\cmidrule(lr){10-13}
	\cmidrule(lr){14-17}
	Method
	& $\beta_1$ & $\beta_2$ & $\beta_3$ & $\beta_4$
	& $\beta_1$ & $\beta_2$ & $\beta_3$ & $\beta_4$
	& $\beta_1$ & $\beta_2$ & $\beta_3$ & $\beta_4$
	& $\beta_1$ & $\beta_2$ & $\beta_3$ & $\beta_4$ \\
	\midrule

	SHIM
	& 3.0 & 1.9 & 5.5 & 14.9
	& 0.11 & 0.11 & 0.12 & 0.12
	& 0.92 & 0.97 & 0.95 & 0.93
	& 95 & 95 & 98 & 93 \\

	SHIM\_U
	& 2.7 & 2.0 & 6.2 & 16.7
	& 0.12 & 0.12 & 0.13 & 0.14
	& 0.96 & 0.96 & 1.00 & 1.04
	& 95 & 97 & 91 & 91 \\

	Lasso
	& 15.9 & 14.8 & 25.0 & 33.9
	& 0.18 & 0.14 & 0.14 & 0.16
	& 1.21 & 1.05 & 1.03 & 1.21
	& 75 & 74 & 79 & 67 \\

	Glasso
	& 1.5 & 16.6 & 12.8 & 28.6
	& 0.44 & 0.79 & 0.45 & 0.53
	& 1.06 & 1.89 & 1.12 & 1.37
	& 95 & 95 & 93 & 95 \\

	GIGG
	& 3.6 & 2.9 & 12.6 & 25.4
	& 0.18 & 0.14 & 0.20 & 0.21
	& 1.11 & 0.95 & 1.13 & 1.22
	& 91 & 95 & 90 & 86 \\

	MBSGSSS
	& 2.8 & 2.6 & 4.4 & 16.2
	& 0.15 & 0.13 & 0.19 & 0.21
	& 1.02 & 0.93 & 1.20 & 1.35
	& 94 & 93 & 95 & 91 \\

	\midrule

	&
	\multicolumn{16}{c}{Outcome 2} \\
	\cmidrule(lr){2-17}
	& \multicolumn{4}{c}{PB (\%)}
	& \multicolumn{4}{c}{ESD}
	& \multicolumn{4}{c}{ESD / SE}
	& \multicolumn{4}{c}{CP (\%)} \\
	\cmidrule(lr){2-5}
	\cmidrule(lr){6-9}
	\cmidrule(lr){10-13}
	\cmidrule(lr){14-17}
	Method
	& $\beta_1$ & $\beta_2$ & $\beta_3$ & $\beta_4$
	& $\beta_1$ & $\beta_2$ & $\beta_3$ & $\beta_4$
	& $\beta_1$ & $\beta_2$ & $\beta_3$ & $\beta_4$
	& $\beta_1$ & $\beta_2$ & $\beta_3$ & $\beta_4$ \\
	\midrule

	SHIM
	& 3.2 & 3.3 & 6.2 & 12.4
	& 0.12 & 0.11 & 0.13 & 0.12
	& 0.99 & 0.98 & 1.04 & 0.94
	& 93 & 93 & 94 & 93 \\

	SHIM\_U
	& 3.1 & 3.5 & 6.1 & 14.1
	& 0.13 & 0.12 & 0.14 & 0.15
	& 1.05 & 1.01 & 1.05 & 1.07
	& 94 & 93 & 94 & 89 \\

	Lasso
	& 15.5 & 17.0 & 26.2 & 31.4
	& 0.17 & 0.15 & 0.15 & 0.15
	& 1.16 & 1.08 & 1.07 & 1.14
	& 77 & 69 & 81 & 74 \\

	Glasso
	& 3.7 & 12.6 & 8.0 & 13.3
	& 0.66 & 0.91 & 0.52 & 0.53
	& 1.62 & 2.29 & 1.31 & 1.34
	& 86 & 95 & 92 & 95 \\

	GIGG
	& 3.0 & 4.6 & 12.9 & 22.6
	& 0.16 & 0.14 & 0.21 & 0.22
	& 1.01 & 0.95 & 1.15 & 1.29
	& 96 & 91 & 91 & 83 \\

	MBSGSSS
	& 2.6 & 3.6 & 5.2 & 13.9
	& 0.15 & 0.13 & 0.19 & 0.22
	& 0.99 & 0.95 & 1.22 & 1.37
	& 92 & 94 & 91 & 87 \\

	\midrule

	\textbf{MNAR}
	& \multicolumn{16}{c}{Outcome 1} \\
	\cmidrule(lr){2-17}
	& \multicolumn{4}{c}{PB (\%)}
	& \multicolumn{4}{c}{ESD}
	& \multicolumn{4}{c}{ESD / SE}
	& \multicolumn{4}{c}{CP (\%)} \\
	\cmidrule(lr){2-5}
	\cmidrule(lr){6-9}
	\cmidrule(lr){10-13}
	\cmidrule(lr){14-17}
	Method
	& $\beta_1$ & $\beta_2$ & $\beta_3$ & $\beta_4$
	& $\beta_1$ & $\beta_2$ & $\beta_3$ & $\beta_4$
	& $\beta_1$ & $\beta_2$ & $\beta_3$ & $\beta_4$
	& $\beta_1$ & $\beta_2$ & $\beta_3$ & $\beta_4$ \\
	\midrule

	SHIM
	& 6.6 & 5.9 & 10.5 & 20.5
	& 0.11 & 0.12 & 0.13 & 0.15
	& 0.93 & 1.00 & 1.00 & 1.10
	& 92 & 91 & 94 & 83 \\

	SHIM\_U
	& 6.3 & 6.4 & 10.3 & 21.3
	& 0.12 & 0.12 & 0.14 & 0.17
	& 0.97 & 1.01 & 1.07 & 1.19
	& 93 & 92 & 91 & 83 \\

	Lasso
	& 21.3 & 22.8 & 32.3 & 39.1
	& 0.16 & 0.14 & 0.15 & 0.18
	& 1.19 & 1.01 & 1.14 & 1.30
	& 55 & 54 & 67 & 56 \\

	Glasso
	& 8.7 & 8.5 & 13.0 & 4.8
	& 0.37 & 0.37 & 0.48 & 0.36
	& 0.93 & 1.00 & 1.28 & 0.96
	& 98 & 98 & 85 & 98 \\

	GIGG
	& 9.3 & 10.8 & 23.6 & 31.5
	& 0.17 & 0.13 & 0.22 & 0.23
	& 1.12 & 0.88 & 1.26 & 1.38
	& 89 & 89 & 83 & 76 \\

	MBSGSSS
	& 8.1 & 9.6 & 14.7 & 20.5
	& 0.14 & 0.12 & 0.22 & 0.22
	& 1.03 & 0.89 & 1.35 & 1.36
	& 88 & 89 & 86 & 88 \\

	\midrule

	&
	\multicolumn{16}{c}{Outcome 2} \\
	\cmidrule(lr){2-17}
	& \multicolumn{4}{c}{PB (\%)}
	& \multicolumn{4}{c}{ESD}
	& \multicolumn{4}{c}{ESD / SE}
	& \multicolumn{4}{c}{CP (\%)} \\
	\cmidrule(lr){2-5}
	\cmidrule(lr){6-9}
	\cmidrule(lr){10-13}
	\cmidrule(lr){14-17}
	Method
	& $\beta_1$ & $\beta_2$ & $\beta_3$ & $\beta_4$
	& $\beta_1$ & $\beta_2$ & $\beta_3$ & $\beta_4$
	& $\beta_1$ & $\beta_2$ & $\beta_3$ & $\beta_4$
	& $\beta_1$ & $\beta_2$ & $\beta_3$ & $\beta_4$ \\
	\midrule

	SHIM
	& 6.4 & 6.6 & 14.8 & 18.1
	& 0.12 & 0.12 & 0.14 & 0.13
	& 1.02 & 0.98 & 1.06 & 1.01
	& 90 & 91 & 91 & 89 \\

	SHIM\_U
	& 6.7 & 7.2 & 16.2 & 19.7
	& 0.13 & 0.12 & 0.16 & 0.16
	& 1.04 & 0.95 & 1.11 & 1.17
	& 88 & 91 & 89 & 86 \\

	Lasso
	& 20.2 & 22.5 & 33.8 & 36.5
	& 0.18 & 0.15 & 0.16 & 0.16
	& 1.27 & 1.12 & 1.17 & 1.21
	& 54 & 49 & 69 & 65 \\

	Glasso
	& 6.4 & 11.0 & 1.2 & 3.2
	& 0.37 & 0.41 & 0.43 & 0.41
	& 0.91 & 1.09 & 1.12 & 1.03
	& 92 & 92 & 90 & 98 \\

	GIGG
	& 8.8 & 11.1 & 24.2 & 27.9
	& 0.17 & 0.15 & 0.21 & 0.21
	& 1.10 & 0.98 & 1.22 & 1.21
	& 85 & 83 & 83 & 83 \\

	MBSGSSS
	& 7.7 & 10.0 & 15.8 & 18.5
	& 0.15 & 0.13 & 0.21 & 0.22
	& 1.05 & 0.98 & 1.31 & 1.36
	& 90 & 87 & 86 & 88 \\
	
	\midrule
	
	\textbf{ULTRA}
	& \multicolumn{16}{c}{Outcome 1} \\
	\cmidrule(lr){2-17}
	& \multicolumn{4}{c}{PB (\%)}
	& \multicolumn{4}{c}{ESD}
	& \multicolumn{4}{c}{ESD / SE}
	& \multicolumn{4}{c}{CP (\%)} \\
	\cmidrule(lr){2-5}
	\cmidrule(lr){6-9}
	\cmidrule(lr){10-13}
	\cmidrule(lr){14-17}
	Method
	& $\beta_1$ & $\beta_2$ & $\beta_3$ & $\beta_4$
	& $\beta_1$ & $\beta_2$ & $\beta_3$ & $\beta_4$
	& $\beta_1$ & $\beta_2$ & $\beta_3$ & $\beta_4$
	& $\beta_1$ & $\beta_2$ & $\beta_3$ & $\beta_4$ \\
	\midrule
	
	SHIM
	& 3.6 & 2.6 & 9.8 & 11.1
	& 0.14 & 0.14 & 0.18 & 0.15
	& 1.05 & 1.09 & 1.28 & 1.05
	& 91 & 89 & 89 & 91 \\
	
	SHIM\_U
	& 3.9 & 2.5 & 12.2 & 11.2
	& 0.14 & 0.13 & 0.19 & 0.15
	& 1.08 & 1.07 & 1.36 & 1.07
	& 92 & 90 & 86 & 93 \\
	
	Lasso
	& 16.2 & 16.7 & 34.7 & 29.0
	& 0.17 & 0.15 & 0.17 & 0.14
	& 1.12 & 1.13 & 1.25 & 1.03
	& 75 & 63 & 71 & 76 \\
	
	
	GIGG
	& 3.0 & 3.2 & 21.2 & 16.3
	& 0.16 & 0.16 & 0.25 & 0.23
	& 1.02 & 1.05 & 1.47 & 1.27
	& 92 & 90 & 81 & 87 \\
	
	MBSGSSS
	& 1.7 & 2.4 & 10.4 & 0.9
	& 0.15 & 0.14 & 0.21 & 0.15
	& 1.02 & 1.02 & 1.45 & 1.05
	& 92 & 93 & 88 & 95 \\
	
	\midrule
	
	&
	\multicolumn{16}{c}{Outcome 2} \\
	\cmidrule(lr){2-17}
	& \multicolumn{4}{c}{PB (\%)}
	& \multicolumn{4}{c}{ESD}
	& \multicolumn{4}{c}{ESD / SE}
	& \multicolumn{4}{c}{CP (\%)} \\
	\cmidrule(lr){2-5}
	\cmidrule(lr){6-9}
	\cmidrule(lr){10-13}
	\cmidrule(lr){14-17}
	Method
	& $\beta_1$ & $\beta_2$ & $\beta_3$ & $\beta_4$
	& $\beta_1$ & $\beta_2$ & $\beta_3$ & $\beta_4$
	& $\beta_1$ & $\beta_2$ & $\beta_3$ & $\beta_4$
	& $\beta_1$ & $\beta_2$ & $\beta_3$ & $\beta_4$ \\
	\midrule
	
	SHIM
	& 2.4 & 0.9 & 14.0 & 4.6
	& 0.12 & 0.14 & 0.17 & 0.14
	& 0.93 & 1.07 & 1.19 & 1.02
	& 96 & 94 & 88 & 95 \\
	
	SHIM\_U
	& 2.1 & 1.1 & 15.7 & 4.6
	& 0.12 & 0.13 & 0.18 & 0.14
	& 0.90 & 1.09 & 1.25 & 1.04
	& 96 & 91 & 87 & 96 \\
	
	Lasso
	& 17.7 & 16.6 & 36.9 & 24.4
	& 0.16 & 0.17 & 0.16 & 0.15
	& 1.10 & 1.22 & 1.15 & 1.06
	& 73 & 74 & 66 & 82 \\
	
	
	GIGG
	& 3.4 & 2.0 & 24.6 & 8.8
	& 0.17 & 0.16 & 0.24 & 0.22
	& 1.03 & 1.07 & 1.36 & 1.23
	& 93 & 91 & 81 & 91 \\
	
	MBSGSSS
	& 2.6 & 1.4 & 12.7 & 2.7
	& 0.14 & 0.13 & 0.20 & 0.15
	& 0.98 & 1.00 & 1.40 & 1.00
	& 93 & 94 & 90 & 95 \\
	
	\bottomrule
	\end{tabular}
\end{table}

\section{Neuroimaging and CSF Acquisition Details}
\label{app:data}

T1-weighted 
magnetization-prepared rapid gradient-echo (MPRAGE) images were 
acquired with a repetition time (TR) of $8.9~\mathrm{ms}$, an echo 
time (TE) of $4.6~\mathrm{ms}$, and an isotropic spatial resolution 
of $1 \times 1 \times 1~\mathrm{mm}^3$. The images were 
skull-stripped using SynthStrip \cite{hoopes2022synthstrip} and 
parcellated into anatomical regions of interest (ROIs) using 
NiChart\_DLMUSE~\cite{bashyam2025dlmuse}. Regional GM volumes were 
derived from the resulting anatomical parcellations.

Pseudo-continuous arterial spin-labeling (pCASL) label-control image 
pairs were acquired to quantify CBF. Acquisition parameters included 
a labeling duration of $1.65~\mathrm{s}$, a post-labeling delay of 
$1.525~\mathrm{s}$, a spatial resolution of $3 \times 3 \times 
7~\mathrm{mm}^3$, a TR of $4{,}000~\mathrm{ms}$, and a TE of 
$13~\mathrm{ms}$. Motion correction was performed using 
\texttt{3dvolreg} in AFNI (version 23.1.10). Regional CBF estimates 
were derived for the anatomical ROIs and corrected for partial-volume 
effects to reduce the influence of regional atrophy on the estimated 
perfusion values.

Cerebrospinal fluid collection was an optional component of the VMAP protocol. Participants undergoing CSF collection completed a morning fasting lumbar puncture using a Sprotte 25-gauge spinal needle. Samples were immediately mixed and centrifuged at $2{,}000~\mathrm{g}$ and $4^\circ\mathrm{C}$ for 10 minutes, after which the supernatant was aliquoted into polypropylene tubes and stored at $-80^\circ\mathrm{C}$. CSF $A\beta_{1-42}$ concentrations were measured in batch using the Fujirebio INNOTEST $\beta\text{-AMYLOID}_{(1\text{--}42)}$ enzyme-linked immunosorbent assay. The reported intra-assay coefficients of variation were below $10\%$.

\end{appendix}

\end{document}